\documentclass[]{article}
\usepackage[english]{babel}
\usepackage[colorlinks=true, allcolors=blue]{hyperref}
\usepackage[autostyle,italian=guillemets]{csquotes}
\usepackage[backend=biber,style=numeric]{biblatex}
\usepackage{amssymb}
\usepackage{mathtools}
\usepackage{amsthm}
\usepackage[section]{algorithm}
\usepackage{algpseudocode}

\theoremstyle{plain}
\newtheorem{theorem}{Theorem}[section]
\newtheorem{lemma}[theorem]{Lemma}

\newtheorem{corollary}[theorem]{Corollary}

\theoremstyle{definition}
\newtheorem{definition}[theorem]{Definition}

\numberwithin{equation}{section}

\newcommand{\N}{\mathbb{N}}

\title{How to program a never-losing chess engine}
\author{Fabio Romano}

\begin{document}

\maketitle

\begin{abstract}
	This article proposes a model, based on graph theory, to represent a variety of two-player games of perfect information, such as chess and checkers. I then provide a backtracking minimax algorithm to find, if it exists, a perfect game strategy (game resolution), and subsequently a way to exploit that algorithm to determine a weaker condition: the existence of a strategy to never lose (always reach at least a draw).
	
	Of course, this does not mean that in practice the algorithm can find such a strategy in a short time, but here we are only concerned with formally proving that this is possible, at least theoretically.
\end{abstract}

\section{The Shannon's idea}

In 1950, Claude Shannon informally outlined a theoretical procedure for playing a perfect game:
\blockquote[Claude Shannon]
{With chess it is possible, in principle, to play a perfect game or construct a machine to do so as follows: One considers in a given position all possible moves, then all moves for the opponent, etc., to the end of the game (in each variation). The end must occur, by the rules of the games after a finite number of moves (remembering the 50 move drawing rule). Each of these variations ends in win, loss or draw. By working backward from the end one can determine whether there is a forced win, the position is a draw or is lost.}

The above idea is the foundation for Shannon's work \cite{shannon:computer_playing_chess} of 1950, which is fundamental, but, theoretically speaking, this article has several issues. First of all, it focuses exclusively on the game of chess, while we would like to design a more general procedure. Second, it relies mostly on heuristics of evaluating functions, lacking of formal proofs for the proposed strategy, while we would like to have a strategy that is provably sound. This is what I intend to do in this article.

\section{Graph terminology and the game model}

Let $G = \left< V, E \right>$ be a directed graph.

\begin{definition}[Path]\label{def:path}
	A path is a sequence of vertices $\left< v_1, \dots, v_n \right> \in V^n$, with $n \ge 1$, such that $(v_i, v_{i+1}) \in E$ for every $i \in \{1, \dots, n-1\}$.
\end{definition}

\begin{definition}[Reachability]\label{def:reachability}
	A vertex $v \in V$ is said to be reachable from a vertex $v_1 \in V$ if there exists a path $\left< v_1, \dots, v_n \right>$ such that $v_n = v$ for some $n \in \N$. In that case, we write $v_1 \leadsto v$.
\end{definition}

\begin{definition}[Cycle]
	A cycle is a path $\left< v_1, \dots, v_n \right>$, with $n \ge 2$, such that $v_1 = v_n$.
\end{definition}

\begin{definition}[Simple path]
	A simple path is a path $\left< v_1, \dots, v_n \right>$ which does not contain cycles, i.e. $\forall i,j \in \{1, \dots, n\}, i \ne j \colon v_i \ne v_j$.
\end{definition}

\begin{definition}[Sinks]
	In this context, we call "sink" those vertices of $G$ which the only arc they have is a self-loop. Formally:
	\[
	sinks(G) := \{v \in V \mid G.Adj(v) = \{v\} \}
	\]
	
	where $G.Adj(v)$ is the set of vertices that are adjacent to $v$, i. e. $G.Adj(v) := \{u \in V \mid (v, u) \in E\}$.
\end{definition}

\begin{definition}[Game graph] \label{def:game_graph}
	A game graph $G := \left< V, E, v_0, win, turn \right>$ is a directed graph in which:
	
	\begin{itemize}
		\item $V$ is the set of positions that can occur during the game;
		
		\item $E \subseteq V^2$ is the set of legal moves;
		
		\item $v_0 \in V \setminus sinks(G)$ is the initial position of the game;
		
		\item $\forall v \in V \setminus sinks(G) \colon G.Adj(v) \ne \emptyset \text{ and } v \notin G.Adj(v)$.
		
		\item $win \colon sinks(G) \to \{-1, 0, 1\}$ is a function that returns the winner for every endgame position: 1 if the attacker won, -1 if the defender won, 0 if it is a draw;
		
		\item $turn \colon V \to \{1, -1\}$ is defined as follows:
		
		\begin{itemize}
			\item $turn(v_0) \in \{1, -1\}$;
			
			\item $\forall (v, u) \in E, v \ne u \colon turn(v) := -turn(u)$.
		\end{itemize}
	\end{itemize}
\end{definition}

\subsection{Evaluation function and depth to winning}

\begin{definition}[$eval \colon V \to \{-1, 0, 1\}$]\label{def:eval_func}
	The evaluation function $eval$, which returns the outcome of a position supposing perfect play, is defined by induction as follows:
	
	\begin{flushleft}
		BASE
	\end{flushleft}
	
	$\forall v \in sinks(G) \colon eval(v) := win(v)$.
	
	\begin{flushleft}
		INDUCTION
	\end{flushleft}
	
	$\forall v \in V \setminus sinks(G)$:
	\begin{itemize}
		\item if $turn(v) = 1$:
		\[
		eval(v) := \max_{u \in G.Adj(v)} eval(u)
		\]
		
		\item if $turn(v) = -1$:
		\[
		eval(v) := \min_{u \in G.Adj(v)} eval(u)
		\]
	\end{itemize}
	
	As you can see, this is the classical definition that relies on minimax principle, which comes from game theory.
\end{definition}

\begin{definition}[Depth to winning]\label{def:win_depth}
	Let $g \in \{1, -1\}$. The depth to winning of $g$ is a function $d_g \colon V \to \N \cup \{ \infty \}$ defined by induction as follows.
	
	\begin{flushleft}
		BASE
	\end{flushleft}

	$\forall v \in sinks(G)$:
	\[
	d_g(v) :=
	\begin{cases}
		0, & \text{if } win(v) = g \\
		\infty, & \text{if } win(v) \ne g \\
	\end{cases}
	\]
	
	\begin{flushleft}
		INDUCTION
	\end{flushleft}
	
	$\forall v \in V \setminus sinks(G)$:
	\begin{itemize}
		\item if $turn(v) = g$:
		\[
		d_g(v) := \min_{u \in G.Adj(v)} d_g(u) + 1
		\]
		
		\item if $turn(v) \ne g$:
		\[
		d_g(v) := \max_{u \in G.Adj(v)} d_g(u) + 1
		\]
	\end{itemize}
\end{definition}

In the game of chess, this distance is commonly called \emph{depth to mate} (DTM). \pagebreak

\begin{lemma}[Good foundation of the evaluation function]\label{lemma:eval_well_founded}
	Let $g \in \{-1, 1\}$. For every $v \in V$:
	\[
	eval(v) = g \iff d_g(v) < \infty
	\]
\end{lemma}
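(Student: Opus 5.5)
The plan is to prove the equivalence by structural induction along the same well-founded order that makes Definitions~\ref{def:eval_func} and~\ref{def:win_depth} meaningful. Both functions are defined ``by induction'' from the sinks backwards. So I will take as implicit that every path from a non-sink vertex eventually reaches a sink, i.e.\ apart from the self-loops on sinks the game graph has no cycles and has finite height. This is the ``end must occur'' hypothesis in Shannon's quotation. Concretely, I would induct on $h(v)$, the maximal length of a path from $v$ to a sink, with $h(v)=0$ for $v \in sinks(G)$. Every $u \in G.Adj(v)$ of a non-sink $v$ then has $h(u) < h(v)$. I will also fix the conventions $\infty + 1 = \infty$ and $n < \infty$ for every $n \in \N$, and assume $G.Adj(v)$ is finite, as it is for positions in chess or checkers.

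\textbf{Base case.} For $v \in sinks(G)$ we have $eval(v) = win(v)$. By definition $d_g(v) = 0 < \infty$ exactly when $win(v) = g$, and $d_g(v) = \infty$ otherwise. So $eval(v) = g \iff d_g(v) < \infty$ holds immediately.

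\textbf{Inductive step.} Let $v \notin sinks(G)$ and assume the claim for every $u \in G.Adj(v)$; this set is nonempty by Definition~\ref{def:game_graph}. I split on whether $turn(v) = g$.

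If $turn(v) = g$, then $eval(v)$ is the extremum favourable to $g$: the $\max$ when $g=1$ and the $\min$ when $g=-1$. Since values lie in $\{-1,0,1\}$ and $g$ is the extreme value in $g$'s preferred direction, $eval(v) = g$ holds iff some successor $u$ has $eval(u) = g$. By the induction hypothesis this holds iff some successor has $d_g(u) < \infty$. That in turn holds iff $\min_{u} d_g(u) + 1 < \infty$, which is $d_g(v) < \infty$.

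If $turn(v) = -g$, then $eval(v)$ is the extremum unfavourable to $g$. So $eval(v) = g$ iff \emph{every} successor has $eval(u) = g$. By the induction hypothesis this holds iff every $d_g(u)$ is finite, and since $G.Adj(v)$ is finite this holds iff $\max_u d_g(u) + 1 < \infty$, i.e.\ $d_g(v) < \infty$. Writing out $g = 1$ and $g = -1$ separately only swaps $\max$ and $\min$, so one case suffices ``by symmetry''.

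The main obstacle is not the case analysis, which is routine, but justifying the induction itself. The paper's definitions silently presuppose that recursion from any vertex terminates at sinks. I would state this explicitly as an acyclicity or finite-height property of game graphs (non-sink vertices lie on no cycle, and there is a bound on path length) and use $h$ as the induction measure. The secondary point is finite branching in the $turn(v) = -g$ case. If $G.Adj(v)$ were infinite, the $\max$ would have to be read as a $\sup$, and a supremum of finite depths could be $\infty$, breaking the backward implication. I would therefore add finite branching as a standing assumption, true for chess-like games.
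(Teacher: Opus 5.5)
Your case analysis is correct, but the induction it rests on is not available in the paper's setting. Acyclicity (no cycles apart from sink self-loops, so that $h(v)$ is finite) is not part of Definition~\ref{def:game_graph}, and the paper deliberately allows cycles. Section~\ref{sect:solvalibility} identifies repeated positions in chess and checkers with cycles of $G$. The GRAY/\textbf{NIL} branch of \textsc{WinTreeRecursion}, the WHITE vertices and the draw-strategy theorem exist only because of those cycles. Under your extra hypothesis no successor is ever GRAY and every visited vertex ends BLACK, so all of that becomes vacuous. Meanwhile the lemma is invoked (e.g.\ in Theorem~\ref{theo:compl_black_vert}) for vertices that can reach cycles. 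There $h(v)=\infty$, and your induction says nothing. Your finite-branching assumption, by contrast, is harmless, since the algorithm already treats $G.Adj(v)$ as a finite list.

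The missing idea is that neither direction needs the graph to be well-founded; each direction has its own finite measure.

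For $(\Leftarrow)$, induct on the natural number $n=d_g(v)$. If $n=0$, $v$ is a sink with $win(v)=g$. If $n\ge 1$, $v$ is not a sink, and Definition~\ref{def:win_depth} gives either a successor with $d_g(u)=n-1$ (when $turn(v)=g$) or $d_g(u)\le n-1$ for all successors (otherwise). So the induction hypothesis applies whatever cycles are present.

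For $(\Rightarrow)$ you must first fix how $eval$ is read on a cyclic graph, because its equations no longer determine it. Take non-sinks $a,b$ with $turn(a)=1$ (hence $turn(b)=-1$), $G.Adj(a)=\{b,s\}$, $G.Adj(b)=\{a,t\}$, $win(s)=0$, $win(t)=1$. The assignment $eval(a)=eval(b)=1$ satisfies Definition~\ref{def:eval_func}. Yet the equations for $d_1$ give $d_1(a)=d_1(a)+2$, i.e.\ $d_1(a)=\infty$. So $(\Rightarrow)$ holds only under the inductive reading of $eval$: $eval(v)=g$ must be produced from winning sinks by finitely many rounds. Round $k+1$ adds non-sinks of turn $g$ with some successor already added, and non-sinks of turn $-g$ with all successors already added. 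Inducting on the round $k$ then gives $d_g(v)\le k$.

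This is exactly the paper's argument: ``the only way to get $eval(v)=g$ is to reach a base case'', followed by the alternating $\exists/\forall$ chain of length $k$. It answers the well-definedness concern you correctly raise, without excluding cycles.
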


\begin{proof}
	By induction on the depth to winning.
	
	\begin{flushleft}
		BASE: $v \in sinks(G)$
	\end{flushleft}
	
	By definitions \ref{def:eval_func} and \ref{def:win_depth}, $d_g(v) \ne \infty \iff g = win(v) = eval(v)$.
	
	\begin{flushleft}
		INDUCTION: $v \in V \setminus sinks(G)$
	\end{flushleft}
	
	\begin{flushleft}
		($\impliedby$)
	\end{flushleft}
	
	Suppose that $d_g(v) < \infty$.
	\begin{itemize}
		\item Case $turn(v) = g$.
		
		By def. \ref{def:win_depth}, $d_g(v) < \infty$ implies that $d_g(u) = d_g(v) - 1 < \infty$ for some $u \in G.Adj(v)$, so $eval(u) = g$ by induction, hence by def. \ref{def:eval_func}:
		\begin{itemize}
			\item if $turn(v) = 1$:
			\[
			eval(v) = \max_{u \in G.Adj(v)} eval(u) = 1
			\]
			
			\item if $turn(v) = -1$:
			\[
			eval(v) = \min_{u \in G.Adj(v)} eval(u) = -1
			\]
		\end{itemize}
	
		Therefore, $eval(v) = g$ anyway.
		
		\item Case $turn(v) \ne g$.
		
		By def. \ref{def:win_depth}, $d_g(v) < \infty$ implies that $d_g(u) \le d_g(v) - 1 < \infty$ for every $u \in G.Adj(v)$, so $eval(u) = g$ for every $u \in G.Adj(v)$ by induction, hence by def. \ref{def:eval_func}:
		\begin{itemize}
			\item if $turn(v) = -g = 1$:
			\[
			eval(v) = \max_{u \in G.Adj(v)} eval(u) = -1
			\]
			
			\item if $turn(v) = -g = -1$:
			\[
			eval(v) = \min_{u \in G.Adj(v)} eval(u) = 1
			\]
		\end{itemize}
		
		Therefore, $eval(v) = g$ anyway.
	\end{itemize}\pagebreak
	
	\begin{flushleft}
		($\implies$)
	\end{flushleft}
	
	Suppose that $eval(v) = g$.
	
	If $turn(v) = g$, by def. \ref{def:eval_func} and \ref{def:win_depth}:
	\[
	\begin{split}
		&eval(v) = g \\
		&\implies \exists u_1 \in G.Adj(v) \colon eval(u_1) = g \\
		&\implies \forall u_2 \in G.Adj(u_1) \colon eval(u_2) = g \\
		&\implies \exists u_3 \in G.Adj(u_2) \colon eval(u_3) = g \\
		&\implies \forall u_4 \in G.Adj(u_3) \colon eval(u_4) = g \\
	\end{split}
	\]
	and so on \dots
	
	Since the only way to get $eval(v) = g$ is to reach a base case in def. \ref{def:eval_func}, then this procedure must be finite. Hence, there must be a $k \in \N$ such that:
	\begin{equation}\label{eq:base_case_win}
		\begin{split}
			&\text{if $k$ is even:} \\
			&\exists u_1 \in G.Adj(v), \forall u_2 \in G.Adj(u_1), \dots, \exists u_{k-1} \in G.Adj(u_{k-2}), \forall u_k \in G.Adj(u_{k-1}) \colon win(u_k) = g \\
			&\text{if $k$ is odd:} \\
			&\exists u_1 \in G.Adj(v), \forall u_2 \in G.Adj(u_1), \dots, \forall u_{k-1} \in G.Adj(u_{k-2}), \exists u_k \in G.Adj(u_{k-1}) \colon win(u_k) = g
		\end{split}
	\end{equation}
	
	By def. \ref{def:win_depth}:
	\begin{equation}\label{eq:base_case_h}
		win(u_k) = g \implies d_g(u_k) = 0
	\end{equation}
	
	And, by induction and def. \ref{def:win_depth}, for every $n \in \{0, \dots, k-1\}$ we have:
	\[
	\begin{split}
		&\text{if $k-n$ is even:} \\
		&\forall u_{k-n} \in G.Adj(u_{k-n-1}) \colon d_g(u_{k-n}) \le n \implies d_g(u_{k-n-1}) \le n + 1 \\
		&\text{if $k-n$ is odd:} \\
		&\exists u_{k-n} \in G.Adj(u_{k-n-1}) \colon d_g(u_{k-n}) \le n \implies d_g(u_{k-n-1}) \le n + 1 \\
	\end{split}
	\]
	
	Therefore, by \eqref{eq:base_case_win} and \eqref{eq:base_case_h} we get $d_g(v) \le k$.
	
	For $turn(v) \ne g$, the reasoning is the same.
\end{proof} \pagebreak

\section{Solvability of the game model}\label{sect:solvalibility}

Clearly, the $eval$ function can be computed by an iterative dynamic programming algorithm that uses the backward induction technique, exploiting the inductive definition. However, we wonder whether this can be done using a recursive algorithm, rather than an iterative one.

Assuming the game has a finite number of legal positions, the answer seems to be yes, as Von Neumann theoretically proved in \cite[ch. 15]{von_neumann:game_theory}, but constructing such an algorithm may not be so straightforward. Starting from the initial position, the structure we get is a graph, not a tree. We know that the backtracking technique can be applied to trees, but we don't know whether the game graph actually represents a tree. This evidently has to do with the possibility of repeating the same position multiple times during a game (classic examples are chess and checkers), which is equivalent to the existence of cycles within the graph itself. A cycle is defined as a path in which the starting and ending vertices coincide (i.e., are indistinguishable).

In games, we usually ask ourselves what the winning strategy is, but here we also want to ask another question: is there a way to never lose? Note that there are three possible outcomes: win, lose, or draw. Therefore, not losing means drawing or winning. We will now write a recursive algorithm based on graph depth-first exploration (cf. \cite[22.3]{cormen:algorithms}) to build a winning strategy, and we will try to prove its correctness and completeness.

\textbf{Note:} the value of attributes not yet initialized is assumed to be \textbf{NIL}.

\begin{algorithmic}[1]
	\Procedure{WinTree}{$G, g$}
		\State \Call{WinTreeRecursion}{$G, g, G.v_0$}
	\EndProcedure
	
	\Procedure{WinTreeRecursion}{$G, g, v$}
		\State $v.color \gets \mathrm{GRAY}$ \label{line:v_gray}
		
		\If{$v \in sinks(G)$} \Comment{base of recursion}
			\If{$win(v) = g$}\label{line:base_case}
				\State $v.win \gets \mathrm{TRUE}$
				\State $v.d \gets 0$
			\Else
				\State $v.win \gets \mathrm{FALSE}$
				\State $v.d \gets \infty$
			\EndIf \pagebreak
			
		\Else \Comment{recursive step}
			\If{$turn(v) = g$}\label{line:init_win}
				\State $v.win \gets \mathrm{FALSE}$ 
				\State $v.d \gets \infty$
			\Else
				\State $v.win \gets \mathrm{TRUE}$
				\State $v.d \gets 0$
			\EndIf
			
			\State $end \gets \mathrm{FALSE}$
			\State $i \gets 0$
			\While{\textbf{not} $end$ \textbf{and} $i < |G.Adj(v)|$}
				\State $w \gets G.Adj(v)[i]$
				
				\If{$w.color = \mathrm{GRAY}$}\label{line:if_gray}
					\State $win \gets \textbf{NIL}$ \Comment{The winning tree can't have cycles}
					\State $d \gets \infty$
				\Else\label{line:if_not_gray}
					\If{$w.color \ne \mathrm{BLACK}$}\label{line:recursion}
						\State \Call{WinTreeRecursion}{$G, g, w$}
					\EndIf
				
					\State $win \gets w.win$\label{line:rec_win_assign}
					\State $d \gets w.d + 1$
				\EndIf
				
				\If{$win \ne (turn(v) \ne g)$}\label{line:change_win}
					\State $v.win \gets win$
					\State $v.d \gets d$
					\If{$v.win \ne \textbf{NIL}$}
						\State $end \gets \mathrm{TRUE}$
					\EndIf
				\ElsIf{$turn(v) \ne g$ \textbf{and} $d > v.d$}\label{line:depth_else}
					\State $v.d \gets d$
				\EndIf
				
				\State $i \gets i + 1$
			\EndWhile
		\EndIf
		
		\If{$v.win = \textbf{NIL}$}\label{line:color_if}
			\State $v.color \gets \mathrm{WHITE}$
		\Else
			\State $v.color \gets \mathrm{BLACK}$
		\EndIf
	\EndProcedure
\end{algorithmic}

It's easy to see that this algorithm terminates. Indeed, although a vertex could be visited multiple times, marking it as gray ensures that it can only be visited once before its branch of the depth-first tree (cf. \cite[22.3]{cormen:algorithms}) is traversed again. In this way, the branches can only be reached through simple paths from the root vertex, and since each outgoing arc of the root cannot be traversed more than once (because the root cannot be visited again until the entire tree has been traversed), then each simple path starting from the root is never traversed more than once, hence each branch can only be traversed a finite number of times. Therefore, the time complexity is $O(d^l \cdot d \cdot l)$, where $d = \max\{|G.Adj(v)| \mid v \in V\}$ is the maximum degree of the vertices and $l = \max\{n \mid \left< v_1, \dots, v_n \right> \text{ is a simple path in } G\}$ is the maximum length of a simple path.

\subsection{Correctness and completeness of the algorithm}

\begin{theorem}[Correctness of the winning strategy]\label{theo:correctness}
	Let $g \in \{1, -1\}$. After the execution of the procedure $\Call{WinTree}{G, g}$, for every $v \in V$ we have:
	\[
	v.win = \mathrm{TRUE} \implies eval(v) = g
	\]
\end{theorem}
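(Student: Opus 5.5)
We prove the stronger invariant that, at every moment of the execution of $\Call{WinTree}{G, g}$, any vertex $v$ with $v.color = \mathrm{BLACK}$ and $v.win = \mathrm{TRUE}$ satisfies $d_g(v) \le v.d < \infty$. Lemma~\ref{lemma:eval_well_founded} then gives $eval(v) = g$. The statement follows because the assignment $v.win \gets \mathrm{TRUE}$ that survives to the end occurs only in the invocation of $\Call{WinTreeRecursion}{G,g,v}$ whose final line colours $v$, and a vertex left with $win = \mathrm{TRUE}$ is necessarily coloured BLACK by line~\ref{line:color_if}. Once a vertex is BLACK it is never re-entered, since line~\ref{line:recursion} skips BLACK vertices. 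Its attributes are therefore frozen, and it suffices to check the invariant at the moment each call returns, by induction on the order in which calls return.

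The base case is a sink. There $v.win = \mathrm{TRUE}$ forces $win(v) = g$, so $v.d = 0 = d_g(v)$.

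For a non-sink $v$ we split on $turn(v)$.

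\emph{Case $turn(v) = g$.} The variable $v.win$ starts at FALSE. It can only be changed at line~\ref{line:change_win} with the condition $win \ne \mathrm{FALSE}$, i.e.\ $win \in \{\mathrm{TRUE}, \textbf{NIL}\}$. If it becomes TRUE, the loop ends immediately. The value TRUE comes from line~\ref{line:rec_win_assign} for a successor $w$ that is BLACK with $w.win = \mathrm{TRUE}$, whether it was BLACK already or became BLACK during the recursive call. By the induction hypothesis $d_g(w) \le w.d$. Hence $d_g(v) \le d_g(w) + 1 \le w.d + 1 = v.d$.

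\emph{Case $turn(v) \ne g$.} Now $v.win$ starts at TRUE with $v.d = 0$. The test at line~\ref{line:change_win} fires whenever $win \ne \mathrm{TRUE}$, i.e.\ when $win$ is FALSE or NIL. This includes the GRAY case, where $win = \textbf{NIL}$. If $v.win$ becomes FALSE the loop stops. If it becomes NIL it stays NIL or later becomes FALSE, but it never returns to TRUE. So $v.win = \mathrm{TRUE}$ at the end means every successor $u$ was examined and each produced $win = \mathrm{TRUE}$. In particular no successor was GRAY, and each was BLACK with $u.win = \mathrm{TRUE}$. In each such step line~\ref{line:depth_else} keeps $v.d \ge u.d + 1$. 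By the induction hypothesis $d_g(u) \le u.d$, so $d_g(v) = \max_u d_g(u) + 1 \le v.d < \infty$.

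I expect the main obstacle to be making the induction well-founded despite cycles and repeated visits. A vertex may be visited, coloured WHITE, and revisited later, and its attributes are overwritten on each visit. The induction must therefore be on the sequence of call-returns in time, not on the structure of the graph. The key facts to verify carefully are two. First, a BLACK vertex is never modified again. Second, the value $w.win$ read at line~\ref{line:rec_win_assign} is exactly the value fixed at $w$'s last return, so the induction hypothesis applies to it.

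A further subtlety is that the induction hypothesis must be phrased with $d_g(w) \le w.d$ rather than $w.d < \infty$ alone. This bound is what lets us conclude $d_g(v) < \infty$ for $v$ itself, and only then can Lemma~\ref{lemma:eval_well_founded} be applied.
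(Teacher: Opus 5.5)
Your proof is correct, but it takes a different route from the paper's. The paper inducts directly on the property $eval(v) = g$. When $turn(v) = g$, one successor $w$ with $w.win = \mathrm{TRUE}$ has $eval(w) = g$ by hypothesis, and since $g$ is the extreme value for the player to move, the max (or min) in Definition~\ref{def:eval_func} forces $eval(v) = g$. When $turn(v) \ne g$, every successor has value $g$, so again $eval(v) = g$. No depths are tracked and Lemma~\ref{lemma:eval_well_founded} is never used.

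You instead carry the quantitative invariant $d_g(v) \le v.d < \infty$. This needs extra bookkeeping on $v.d$ through line~\ref{line:depth_else}. You then invoke Lemma~\ref{lemma:eval_well_founded}, using only its ($\impliedby$) direction, which is the cleanly well-founded half: an induction on the finite number $d_g(v)$. This is essentially the ``further proof'' the paper mentions after Theorem~\ref{theo:depth_upper_bound}. For vertices with $v.win = \mathrm{TRUE}$, your invariant combines the ($\implies$) half of Theorem~\ref{theo:finite_depth_win_equiv} with Theorem~\ref{theo:depth_upper_bound}. So you also get the certified bound $d_g(v) \le v.d$, at the cost of a longer argument.

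Your closing remark, that the hypothesis must be phrased with $d_g(w) \le w.d$, only applies to your chosen route. The plain invariant $eval(w) = g$ already propagates through the minimax equations.

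On the other hand, you induct on the temporal order of call returns, and you make two facts explicit: BLACK vertices are frozen, and in the $turn(v) \ne g$ case a \textbf{NIL} coming from a GRAY successor can never revert to $\mathrm{TRUE}$. This states precisely what the paper's ``structural induction on the recursion tree'' leaves informal, since the call that blackened $w$ need not be a descendant of $v$'s call. That induction scheme would work equally well with the paper's simpler invariant.
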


\begin{proof}
	By structural induction on the recursion tree.
	
	Suppose that $v.win = \mathrm{TRUE}$.
	
	\begin{flushleft}
		BASE: $v \in sinks(G)$
	\end{flushleft}
	
	 $v.win = \mathrm{TRUE}$ implies that an invocation to $\Call{WinTreeRecursion}{G, g, v}$ was made, and then that $win(v) = g$ by the if at line \ref{line:base_case}, hence $eval(v) = g$ by def. \ref{def:eval_func}.
	
	\begin{flushleft}
		INDUCTION: $v \in V \setminus sinks(G)$
	\end{flushleft}

	\begin{itemize}
		\item If $turn(v) = g$:
		
		$v.win$ initially gets the value $\mathrm{FALSE}$ by the if at line \ref{line:init_win}, so, by the else at line \ref{line:if_not_gray} and the if at line \ref{line:change_win}, $v.win = \mathrm{TRUE}$ implies that $w.win = \mathrm{TRUE}$ for some $w \in G.Adj(v)$. The if at line \ref{line:recursion} also implies that we can apply the inductive hypothesis on $w$ (because it ensures that there has been an invocation to $\Call{WinTreeRecursion}{G, g, w}$ that caused $w.win = \mathrm{TRUE}$), so $w.win = \mathrm{TRUE}$ implies $eval(w) = g$, hence $eval(v) = g$ by def. \ref{def:eval_func}.
		
		\item If $turn(v) \ne g$:
		$v.win$ initially gets the value $\mathrm{TRUE}$ by the if at line \ref{line:init_win}, so, by the else at line \ref{line:if_not_gray} and the if at line \ref{line:change_win}, $v.win = \mathrm{TRUE}$ implies that $w.win = \mathrm{TRUE}$ for every $w \in G.Adj(v)$. The if at line \ref{line:recursion} also implies that we can apply the inductive hypothesis on $w$, so $w.win = \mathrm{TRUE}$ implies $eval(w) = g$ for every $w \in G.Adj(v)$, hence $eval(v) = g$ by def. \ref{def:eval_func}.\qedhere
	\end{itemize}
\end{proof}

\begin{theorem}[Completeness for black vertices]\label{theo:compl_black_vert}
	Let $g \in \{1, -1\}$. After the execution of the procedure $\Call{WinTree}{G, g}$, for every $v \in V$ such that $v.color = \mathrm{BLACK}$ we have:
	\[
	 eval(v) = g \implies v.win = \mathrm{TRUE}
	\]
\end{theorem}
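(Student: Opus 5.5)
I would argue by structural induction on the recursion tree, mirroring the proof of Theorem \ref{theo:correctness}, but proving the contrapositive: if $v.color = \mathrm{BLACK}$ and $v.win \ne \mathrm{TRUE}$, then $eval(v) \ne g$. A black vertex has $v.win \ne \textbf{NIL}$ by the if at line \ref{line:color_if}, so $v.win = \mathrm{FALSE}$ is the only case to treat. I would first record an invariant. Once a vertex is colored $\mathrm{BLACK}$ it is never revisited, because line \ref{line:recursion} only recurses on non-black vertices. Hence its attributes $win$ and $d$ are frozen at the values set in its last invocation, and any later reader at line \ref{line:rec_win_assign} sees exactly those values.

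The base case $v \in sinks(G)$ is immediate. $v.win = \mathrm{FALSE}$ means $win(v) \ne g$ by line \ref{line:base_case}, so $eval(v) = win(v) \ne g$.

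For the inductive step I split on $turn(v)$.

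If $turn(v) \ne g$, then $v.win$ starts $\mathrm{TRUE}$. It can become $\mathrm{FALSE}$ only through line \ref{line:change_win} with $win = \mathrm{FALSE}$, and that value comes from a child $w$ read at line \ref{line:rec_win_assign}. That child was therefore black with $w.win = \mathrm{FALSE}$ at the time. By the invariant and the inductive hypothesis, $eval(w) \ne g$. The minimax in Definition \ref{def:eval_func}, taken for the player $-g$, then gives $eval(v) \ne g$.

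If $turn(v) = g$, then $v.win$ starts $\mathrm{FALSE}$ and stays $\mathrm{FALSE}$ only if no child ever delivered $\mathrm{TRUE}$. Since $v$ is black, the loop must have set $v.win$ to a non-$\textbf{NIL}$ value, or else every child was examined with values in $\{\mathrm{FALSE}\}$. I need every $w \in G.Adj(v)$ to satisfy $eval(w) \ne g$. Children read through the else branch are black with $w.win = \mathrm{FALSE}$, so the inductive hypothesis applies to them.

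The main obstacle is the children that were $\mathrm{GRAY}$ when examined at line \ref{line:if_gray}. They assign $\textbf{NIL}$, and since line \ref{line:change_win} is taken whenever $win \ne \mathrm{FALSE}$, this would overwrite $v.win$ with $\textbf{NIL}$. The same happens with children that returned white. In either case $v$ would end up $\mathrm{WHITE}$ unless a later child sets $\mathrm{TRUE}$.

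So I would prove an auxiliary claim: in the $turn(v) = g$ case, if $v$ finishes black with $v.win = \mathrm{FALSE}$, then every child was read via line \ref{line:rec_win_assign} with value $\mathrm{FALSE}$. Symmetrically, in the $turn(v) \ne g$ case the loop stops at the first $\mathrm{FALSE}$ child. This claim comes from tracing line \ref{line:change_win}. A $\textbf{NIL}$ or $\mathrm{TRUE}$ read is sticky toward non-$\mathrm{FALSE}$, since $\mathrm{FALSE}$ reads never change $v.win$ when $turn(v) = g$. With the claim in hand, all children are covered by the inductive hypothesis, and $\max$/$\min$ over values all different from $g$ gives $eval(v) \ne g$, completing the induction.
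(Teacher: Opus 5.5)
Your proof is correct, but it uses a different induction than the paper's.

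\textbf{The paper's route.} The paper inducts on the depth to winning $d_g(v)$. This quantity is finite only because of the hypothesis $eval(v) = g$ together with Lemma \ref{lemma:eval_well_founded}. The paper then assumes $v.win = \mathrm{FALSE}$ and derives a contradiction. In the $turn(v) = g$ case it picks a child $w$ with $d_g(w) = d_g(v) - 1$. In the other case it uses $d_g(u) \le d_g(v) - 1$ for the child $u$ that delivered $\mathrm{FALSE}$.

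\textbf{Your route.} You induct on the execution itself and prove the equivalent form $v.win = \mathrm{FALSE} \implies eval(v) \ne g$ directly. This is Corollary \ref{corol:compl_correctness}, and it never needs $d_g$ or Lemma \ref{lemma:eval_well_founded}.

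\textbf{Common ingredient.} Both proofs rest on the same control-flow fact. A final $\mathrm{FALSE}$ at a vertex with $turn(v) = g$ forces every child to have been read at line \ref{line:rec_win_assign} as a black $\mathrm{FALSE}$ vertex. At a vertex with $turn(v) \ne g$ it forces at least one such child. The paper only asserts this. You justify it by noting that $\textbf{NIL}$ and $\mathrm{TRUE}$ reads are sticky at line \ref{line:change_win}, which is the more careful treatment.

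\textbf{One point to make precise.} Your induction cannot literally be on subtrees of the recursion tree. A child that is already $\mathrm{BLACK}$ when examined may have been finished in an unrelated branch, outside the subtree of $v$'s invocation. The induction should be strong induction on the time at which a vertex's last invocation returns. This is well-founded because every child read through line \ref{line:rec_win_assign} finished strictly before that read. Your invariant that black vertices are frozen is exactly what makes this work.

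\textbf{Trade-off.} The paper's measure is semantic and independent of the execution trace, and the same depth-based reasoning is reused in the descent argument of Theorem \ref{theo:completeness}. Yours is operational and self-contained, needing only termination and the freezing invariant.
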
\pagebreak

\begin{proof}
	By induction on the depth to winning.
	
	Suppose that $eval(v) = g$.
	
	\begin{flushleft}
		BASE: $v \in sinks(G)$
	\end{flushleft}
	
	$v.color = \mathrm{BLACK}$ implies that an invocation to $\Call{WinTreeRecursion}{G, g, v}$ was made, and since $eval(v) = g$ implies $win(v) = g$ by def. \ref{def:eval_func}, then $v.win = \mathrm{TRUE}$ by the if at line \ref{line:base_case}.
	
	\begin{flushleft}
		INDUCTION: $v \in V \setminus sinks(G)$
	\end{flushleft}
	
	By the if at line \ref{line:color_if}, $v.color = \mathrm{BLACK}$ implies $v.win \ne \textbf{NIL}$, so it can only be that $v.win = \mathrm{TRUE}$ or $v.win = \mathrm{FALSE}$.
	
	Suppose by contradiction that $v.win = \mathrm{FALSE}$.
	\begin{itemize}
		\item If $turn(v) = g$:
		
		$v.win$ initially gets the value $\mathrm{FALSE}$ by the if at line \ref{line:init_win}, so $v.win = \mathrm{FALSE}$ implies that $u.win = \mathrm{FALSE}$ for all $u \in G.Adj(v)$ by the if at line \ref{line:change_win} and the else at line \ref{line:if_not_gray}. Furthermore, by the if at line \ref{line:color_if}, $u.win = \mathrm{FALSE}$ implies $u.color = \mathrm{BLACK}$ for all $u \in G.Adj(v)$.
		
		Since $eval(v) = g$, then $eval(w) = g$ for some $w \in G.Adj(v)$ by def. \ref{def:eval_func}. Without loss of generality, we can assume that
		\[
		d_g(w) = \min_{t \in G.Adj(v) \colon eval(t) = g} d_g(t)
		\]
		
		Since, by the lemma \ref{lemma:eval_well_founded}, $eval(w) = g$ implies $d_g(w) < \infty$, and since $d_g(u) = \infty$ for every $u \in G.Adj(v)$ such that $eval(u) \ne g$, then $d_g(w) = d_g(v) - 1 < \infty$ by def. \ref{def:win_depth}. This means that we can apply the inductive hypothesis on $w$, so $w.color = \mathrm{BLACK}$ and $eval(w) = g$ imply $w.win = \mathrm{TRUE}$, contradicting $u.win = \mathrm{FALSE}$ for all $u \in G.Adj(v)$. Therefore, it must be that $v.win = \mathrm{TRUE}$.
		
		\item If $turn(v) \ne g$:
		
		$v.win$ initially gets the value $\mathrm{TRUE}$ by the if at line \ref{line:init_win}, so $v.win = \mathrm{FALSE}$ implies that $u.win = \mathrm{FALSE}$ for some $u \in G.Adj(v)$ by the if at line \ref{line:change_win} and the else at line \ref{line:if_not_gray}. Furthermore, by the if at line \ref{line:color_if}, $u.win = \mathrm{FALSE}$ implies $u.color = \mathrm{BLACK}$.
		
		Since $eval(v) = g$, then $eval(w) = g$ for all $w \in G.Adj(v)$ by def. \ref{def:eval_func}. By the lemma \ref{lemma:eval_well_founded}, $eval(w) = g$ implies $d_g(w) < \infty$, hence we have by def. \ref{def:win_depth}:
		\[
		d_g(w) \le \max_{t \in G.Adj(v)} d_g(t) = d_g(v) - 1 < \infty
		\]
		
		This means that we can apply the inductive hypothesis on $u$, so $u.color = \mathrm{BLACK}$ and $eval(u) = g$ imply $u.win = \mathrm{TRUE}$, contradicting $u.win = \mathrm{FALSE}$. Therefore, it must be that $v.win = \mathrm{TRUE}$.\qedhere
	\end{itemize}
\end{proof}\pagebreak

\begin{corollary}[Complementary correctness]\label{corol:compl_correctness}
	Let $g \in \{1, -1\}$. After the execution of the procedure $\Call{WinTree}{G, g}$, for every $v \in V$ we have:
	\[
	v.win = \mathrm{FALSE} \implies eval(v) \ne g
	\]
\end{corollary}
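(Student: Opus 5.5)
The plan is to obtain the corollary as the contrapositive of Theorem \ref{theo:compl_black_vert}. The only missing ingredient is that $v.win = \mathrm{FALSE}$ forces $v.color = \mathrm{BLACK}$.

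First, fix $v \in V$ with $v.win = \mathrm{FALSE}$ after the execution of $\Call{WinTree}{G, g}$. The attribute $win$ is initialized to \textbf{NIL}, and it is assigned only inside an invocation of $\Call{WinTreeRecursion}{G, g, v}$. So at least one such invocation took place. The last one determines the final values of both $v.win$ and $v.color$, because no other invocation ever writes these attributes of $v$. Every invocation ends with the if at line \ref{line:color_if}, which sets $v.color$ from the value $v.win$ has at that moment. Since the final value $v.win = \mathrm{FALSE} \ne \textbf{NIL}$ was left by that last invocation, the same invocation set $v.color \gets \mathrm{BLACK}$. Thus $v.color = \mathrm{BLACK}$ after the execution.

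Second, suppose by contradiction that $eval(v) = g$. Since $v.color = \mathrm{BLACK}$, Theorem \ref{theo:compl_black_vert} applies and gives $v.win = \mathrm{TRUE}$. This contradicts $v.win = \mathrm{FALSE}$, so $eval(v) \ne g$.

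The main obstacle is the first step. One must check that the pair $(v.win, v.color)$ is always left consistent by line \ref{line:color_if}. In particular, no later assignment to $v.win$ may happen outside an invocation on $v$ itself. Assignments to the local variable $win$ at lines \ref{line:if_gray} and \ref{line:rec_win_assign} do not affect $w.win$. The updates at line \ref{line:change_win} modify only the attributes of the current vertex $v$. So inspecting the pseudocode settles this without further induction.
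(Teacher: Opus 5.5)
Your proof is correct and follows the paper's own argument. The paper likewise reads off from the if at line \ref{line:color_if} that $v.win = \mathrm{FALSE}$ forces $v.color = \mathrm{BLACK}$, and then applies Theorem \ref{theo:compl_black_vert} contrapositively; your extra bookkeeping (the last-finishing invocation on $v$ sets $v.color$ from the final value of $v.win$, and nothing else writes these attributes) just makes explicit what the paper's one-line proof takes for granted.
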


\begin{proof}
	By the if at line \ref{line:color_if}, $v.win = \mathrm{FALSE}$ implies $v.color = \mathrm{BLACK}$, therefore $v.win = \mathrm{FALSE}$ entails $eval(v) \ne g$ by the theorem \ref{theo:compl_black_vert}.
\end{proof}

\begin{theorem}[Completeness of the search]\label{theo:completeness}
	Let $g \in \{1, -1\}$. After the execution of the procedure $\Call{WinTree}{G, g}$, we have:
	\[
	eval(v_0) = g \implies v_0.win = \mathrm{TRUE}
	\]
\end{theorem}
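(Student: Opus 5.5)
The natural route is to reduce the statement to Theorem~\ref{theo:compl_black_vert}. That theorem already gives $eval(v_0) = g \implies v_0.win = \mathrm{TRUE}$ once we know that $v_0.color = \mathrm{BLACK}$ when $\Call{WinTree}{G, g}$ returns. By the if at line~\ref{line:color_if}, this is equivalent to $v_0.win \ne \textbf{NIL}$. So the plan is: assume $eval(v_0) = g$, show that the root invocation cannot leave $v_0.win = \textbf{NIL}$, and then conclude by Theorem~\ref{theo:compl_black_vert}.

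First, recall that $v_0 \notin sinks(G)$, so the recursive branch runs at the root. $v_0.win$ starts as a non-\textbf{NIL} value (line~\ref{line:init_win}). It becomes \textbf{NIL} only through the if at line~\ref{line:change_win} with $win = \textbf{NIL}$. That value comes either from a gray child, or from a child $w$ that returned with $w.win = \textbf{NIL}$ and hence was recolored WHITE.

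The key observation concerns the root invocation. When it starts, no vertex is gray except $v_0$ itself. Hence a child $w$ of $v_0$ is gray at line~\ref{line:if_gray} only if $w = v_0$, which is impossible since $v_0 \notin G.Adj(v_0)$. Therefore a \textbf{NIL} at the root can only propagate from some child whose own invocation ended with \textbf{NIL}.

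I would then prove by induction on $d_g(v)$ (finite by Lemma~\ref{lemma:eval_well_founded}) the following auxiliary claim. If $eval(v) = g$ and $v$ is explored by a call at a moment when the gray set contains none of the vertices on some optimal winning subtree of $v$, then the call returns $v.win = \mathrm{TRUE}$.

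\begin{itemize}
\item If $turn(v) = g$: take an optimal child $w$ with $d_g(w) = d_g(v) - 1$. Either the loop stops earlier with TRUE, or it reaches $w$ and, by induction, gets TRUE.
\item If $turn(v) \ne g$: every child satisfies $eval = g$ with smaller depth. One must also rule out a child returning FALSE, which is excluded by Corollary~\ref{corol:compl_correctness}.
\end{itemize}

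For the root the hypothesis holds trivially, since only $v_0$ is gray.

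The main obstacle is exactly this gray-set hypothesis. A winning subtree of $v_0$ in a graph with cycles may revisit vertices that lie on the current DFS stack. I would try to choose winning strategies that strictly decrease $d_g$ along every edge, i.e.\ moves to children of depth $d_g(v) - 1$ for $g$, and any move for the opponent, whose depth drops by at least one by Definition~\ref{def:win_depth}. Such a subtree never revisits an ancestor, since depths strictly decrease along it.

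What remains to argue is that gray vertices on the stack above $v$ were reached along this same decreasing subtree. This is where I expect the real work lies: the DFS may enter $v$ via a non-optimal branch whose gray ancestors have smaller $d_g$. If that fails, my fallback is to argue at the root directly. Any child returning \textbf{NIL} does not stop the loop, so the root's loop still reaches an optimal child. For $turn(v_0) = g$ this gives TRUE and sets $end$. For $turn(v_0) \ne g$, I would need every child's call to be non-\textbf{NIL}, and I would handle this case via the depth-decreasing argument above.
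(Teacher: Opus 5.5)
\textbf{Verdict: incomplete.} Your ingredients are the right ones: the reduction to Theorem~\ref{theo:compl_black_vert}, the remark about the root's gray set, and a winning subtree along which $d_g$ strictly decreases. But the proposal stops at the crux. You leave open whether the gray-set hypothesis of your auxiliary claim passes to children, suggest it might fail, and your fallback is unfinished for $turn(v_0) \ne g$. As written there is no proof yet. The missing step is short, though, and the obstacle you anticipate does not arise.

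First, state the hypothesis for the gray set $S$ \emph{immediately before} the call on $v$. After line~\ref{line:v_gray} the vertex $v_0$ is gray and lies in its own tree, so ``only $v_0$ is gray'' is the wrong formulation; before the root call, $S = \emptyset$. Second, note that the gray vertices are exactly the current recursion stack: a vertex is grayed on entry, recolored on exit, and gray vertices are never invoked. So when the loop of the call on $v$ examines a child $w$, all earlier sibling calls have returned and the gray set is precisely $S \cup \{v\}$. If $S \cap nodes(T_v) = \emptyset$ for a depth-decreasing winning tree $T_v$, then $S \cap nodes(T_w) = \emptyset$ because $T_w$ is a subtree of $T_v$. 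Also $v \notin nodes(T_w)$, because every node of $T_w$ has $d_g \le d_g(w) < d_g(v)$. Only this disjointness is needed. It does not matter whether the stack runs along the subtree, nor how the DFS may have entered $v$ or $w$ earlier through non-optimal branches. An earlier visit that ended WHITE is simply redone by the if at line~\ref{line:recursion}, from the current stack. One that ended BLACK is permanent, since black vertices are never re-invoked, so its value is TRUE: FALSE would contradict Corollary~\ref{corol:compl_correctness}. Hence every relevant child is not gray and yields TRUE, by that corollary if black and by the inductive hypothesis otherwise. This closes both of your cases; for $turn(v) \ne g$, no child yields FALSE or NIL, so $v.win$ keeps its initial TRUE. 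The root call with $S = \emptyset$ then gives the theorem directly, with no fallback.

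Once completed, your argument is a forward induction. It is essentially Lemma~\ref{lemma:correct_tree_construction} specialised to depth-decreasing trees, which avoids the minimal-height argument used there. The paper instead argues the contrapositive by infinite descent. Assuming $v_0.win = \textbf{NIL}$, it extracts a chain $v_0, v_1, v_2, \dots$ in which each $v_{k+1}$ is invoked from $v_k$ and returns NIL. The chain alternates between a $d_g$-optimal move of $g$ and the opponent's child responsible for the NIL. Since the chain is the call stack, the gray set when $v_k$ examines $v_{k+1}$ is $\{v_0, \dots, v_k\}$, and all of these have larger $d_g$ than $v_{k+1}$. So $d_g$ would decrease forever in $\N$. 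Both proofs rest on the same invariant; the paper's version needs no winning trees.
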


\begin{proof}
	By infinite descent.
	
	Suppose that $eval(v_0) = g$. If we prove that $v_0.color = \mathrm{BLACK}$, then $v_0.win = \mathrm{TRUE}$ follows by the theorem \ref{theo:compl_black_vert}. So, let's prove that.
	
	Suppose by contradiction that $v_0.color \ne \mathrm{BLACK}$ after the invocation of $\Call{WinTree}{G, g}$. Then $v_0.win = \textbf{NIL}$ by the if at line \ref{line:color_if}.
	
	If $turn(v_0) = g$:
	\begin{enumerate}
		\item $eval(v_0) = g$ implies $eval(v_1) = g$ for some $v_1 \in G.Adj(v_0)$ by def. \ref{def:eval_func}. Since $v_0.win = \textbf{NIL}$ implies $v_0 \notin sinks(G)$ by the if at line \ref{line:base_case}, we can assume that $d_g(v_1) = d_g(v_0) - 1 < \infty$ without loss of generality, by the same reasoning of theorem \ref{theo:compl_black_vert} on the depths.
		
		Since $v_0.win$ initially gets the value $\mathrm{FALSE}$ by the if at line \ref{line:init_win}, $v_0.win = \textbf{NIL}$ implies $v_1.win \ne \mathrm{TRUE}$ or $v_1.color = \mathrm{GRAY}$ by the ifs at lines \ref{line:change_win}, \ref{line:if_gray} and the else at line \ref{line:if_not_gray}. Since the set of gray vertices is $\{v_0\}$ at the invocation of $\Call{WinTreeRecursion}{G, g, v_0}$ and $d_g(v_1) < d_g(v_0)$, then $v_1.color \ne \mathrm{GRAY}$ and so $v_1.win \ne \mathrm{TRUE}$. Hence, by the theorem \ref{theo:compl_black_vert}, $eval(v_1) = g$ implies $v_1.color \ne \mathrm{BLACK}$, so the invocation of $\Call{WinTreeRecursion}{G, g, v_1}$ was made by the if at line \ref{line:recursion} and $v_1.win = \textbf{NIL}$ by the if at line \ref{line:color_if}.
		
		\item Since $v_1.win$ initially gets the value $\mathrm{TRUE}$ by the if at line \ref{line:init_win}, $v_1.win = \textbf{NIL}$ implies $v_2.win = \textbf{NIL}$ or $v_2.color = \mathrm{GRAY}$ for some $v_2 \in G.Adj(v_1)$  by the ifs at lines \ref{line:change_win}, \ref{line:if_gray} and the else at line \ref{line:if_not_gray}. Furthermore, we have $d_g(v_2) \le d_g(v_1) - 1 < \infty$ by the same reasoning of theorem \ref{theo:compl_black_vert} on the depths.
		
		Since the set of gray vertices is $\{v_0, v_1\}$ at the invocation of $\Call{WinTreeRecursion}{G, g, v_1}$ and $d_g(v_2) < d_g(v_1) < d_g(v_0)$, then $v_2.color \ne \mathrm{GRAY}$ and so $v_2.win = \textbf{NIL}$, thus $v_2.color \ne \mathrm{BLACK}$ by the if at line \ref{line:color_if}, hence the invocation of $\Call{WinTreeRecursion}{G, g, v_2}$ was made by the if at line \ref{line:recursion}. Furthermore, $turn(v_1) \ne g$ and $eval(v_1) = g$ imply $eval(v_2) = g$ by def. \ref{def:eval_func}.
	\end{enumerate}
	
	Applying the reasoning at point 1 with $v_2$ instead of $v_0$, we have $d_g(v_3) < d_g(v_2) < d_g(v_1) < d_g(v_0)$ for some $v_3 \in G.Adj(v_2)$, and applying the reasoning at point 2 with $v_3$ instead of $v_1$, we have $d_g(v_4) < d_g(v_3) < d_g(v_2) < d_g(v_1) < d_g(v_0)$ for some $v_4 \in G.Adj(v_3)$, and so on \dots
	
	Going forward with this procedure, we get an infinite succession of vertices $\{v_k\}_{k \in \N}$, where $d_g(v_{k+1}) < d_g(v_k)$. But $eval(v_0) = g$ implies that $d_g(v_0) < \infty$ by the lemma \ref{lemma:eval_well_founded}, therefore, as $d_g(v_k) \in \N$ by def. \ref{def:win_depth}, the succession cannot be infinite, leading to a contradiction.
	
	For $turn(v_0) \ne g$ the reasoning is the same, but the order of application of the points 1 and 2 is reversed.
\end{proof}

\subsection{Estimate of the depth to winning}

\begin{theorem}\label{theo:finite_depth_win_equiv}
	Let $g \in \{1, -1\}$. After the execution of the procedure $\Call{WinTree}{G, g}$, for every $v \in V$ such that $v.color \ne \mathbf{NIL}$ we have:
	\[
	v.win = \mathrm{TRUE} \iff v.d < \infty
	\]
\end{theorem}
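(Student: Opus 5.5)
The plan is to prove a slightly stronger invariant: \emph{at the moment any invocation $\Call{WinTreeRecursion}{G, g, v}$ returns, $v.win = \mathrm{TRUE} \iff v.d < \infty$}, and, equivalently, $v.win \in \{\mathrm{FALSE}, \textbf{NIL}\} \iff v.d = \infty$. The statement then follows because a vertex with $v.color \ne \textbf{NIL}$ has been invoked at least once. Its final attributes $v.win$ and $v.d$ are exactly those written by its last invocation: these attributes are only ever written inside $\Call{WinTreeRecursion}{G, g, v}$ itself. I would prove the invariant by induction on the order in which invocations terminate, i.e.\ structural induction on the recursion tree, as in theorem \ref{theo:correctness}. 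A vertex may be invoked several times, since it can turn WHITE again. So the induction must be over invocations rather than over vertices.

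For a sink $v$ the claim is immediate from the if at line \ref{line:base_case}, which sets the pairs $(\mathrm{TRUE}, 0)$ or $(\mathrm{FALSE}, \infty)$. For a non-sink $v$, I first show that in every iteration of the while loop the local pair $(win, d)$ satisfies $win = \mathrm{TRUE} \iff d < \infty$.

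\begin{itemize}
\item If $w$ is GRAY, the pair is $(\textbf{NIL}, \infty)$.
\item Otherwise $w$ is either BLACK, or has just been recursed on at line \ref{line:recursion}. In both cases $w.win, w.d$ are the values left by an invocation on $w$ that terminated before the current point. The inductive hypothesis applies, and adding $1$ preserves finiteness.
\end{itemize}

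Then I show a loop invariant on $(v.win, v.d)$ by splitting on $turn(v)$. The comparison at line \ref{line:change_win} is read with $\textbf{NIL}$ different from both booleans.

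\begin{itemize}
\item Case $turn(v) = g$. The initial pair is $(\mathrm{FALSE}, \infty)$. The test at line \ref{line:change_win} succeeds only when $win \in \{\mathrm{TRUE}, \textbf{NIL}\}$, and then the whole consistent pair $(win, d)$ is copied into $v$. The elsif at line \ref{line:depth_else} never fires, because its guard requires $turn(v) \ne g$. So the invariant is preserved.
\item Case $turn(v) \ne g$. The initial pair is $(\mathrm{TRUE}, 0)$. The first branch fires only when $win \in \{\mathrm{FALSE}, \textbf{NIL}\}$, copying a pair with $d = \infty$. Otherwise $win = \mathrm{TRUE}$ with $d < \infty$, and only the elsif can act, raising $v.d$ to the larger value.
\end{itemize}

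The main obstacle is this last subcase, since the elsif may update $v.d$ without touching $v.win$. I handle it by distinguishing the current value of $v.win$.

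\begin{itemize}
\item If $v.win = \mathrm{TRUE}$, then $v.d$ was finite and $d$ is finite, so $v.d$ remains finite.
\item If $v.win = \textbf{NIL}$ (set earlier by a GRAY or NIL successor, without ending the loop), then $v.d = \infty$. The condition $d > v.d$ fails for finite $d$, so $v.d$ stays $\infty$.
\item If $v.win = \mathrm{FALSE}$, the loop has already ended, so no further update occurs.
\end{itemize}

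Hence the invariant holds at loop exit. The final coloring at line \ref{line:color_if} does not modify $v.win$ or $v.d$, which closes the induction and proves the theorem.
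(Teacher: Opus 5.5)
Your proof is correct. It shares the paper's skeleton: induction over the recursion, sinks as the base case, and a split on $turn(v)$. The inductive step, however, is organized differently.

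The paper proves the two implications separately by tracing back which successor last wrote $v.win$ and $v.d$. For $turn(v)=g$ it exhibits a $w$ with $w.win=\mathrm{TRUE}$ and $v.d=w.d+1$. For $turn(v)\ne g$ it asserts $w.d+1\le v.d$ for every $w\in G.Adj(v)$. You instead first show that each local pair $(win,d)$ is consistent, and then carry a single forward invariant, the consistency of $(v.win,v.d)$, through every iteration of the loop.

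Your organization is more rigorous in three places where the paper is terse:
\begin{itemize}
\item You state the claim at the return of each invocation rather than for final values. This is what the induction really uses, since a $\mathrm{WHITE}$ successor may be re-invoked and overwritten after $v$ has read it.
\item You treat $\mathrm{GRAY}$ successors explicitly through the pair $(\mathbf{NIL},\infty)$.
\item Your subcase $v.win=\mathbf{NIL}$ supplies the monotonicity fact that once $v.d=\infty$ it stays $\infty$. The paper's claim ``$w.d+1\le v.d<\infty$ for all $w$'' tacitly needs this, together with the observation that the loop can only stop early after setting $v.d=\infty$.
\end{itemize}

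The paper's backward tracing has its own advantage: it produces the witness relations between $v.d$ and the $w.d$'s, which are reused almost verbatim in the proof of Theorem~\ref{theo:depth_upper_bound}. Your invariant yields only the equivalence, though it could easily be extended to carry those relations too.
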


\begin{proof}
	By structural induction on the recursion tree.
	
	\begin{flushleft}
		BASE: $v \in sinks(G)$
	\end{flushleft}
	
	It follows from the if at line \ref{line:base_case}.
	
	\begin{flushleft}
		INDUCTION: $v \in V \setminus sinks(G)$
	\end{flushleft}
	
	\begin{flushleft}
		($\implies$)
	\end{flushleft}
	
	Suppose that $v.win = \mathrm{TRUE}$.
	\begin{itemize}
		\item If $turn(v) = g$:
		
		$v.win$ initially gets the value $\mathrm{FALSE}$ by the if at line \ref{line:init_win}, so $v.win = \mathrm{TRUE}$ implies that $w.win = \mathrm{TRUE}$ and $v.d = w.d + 1$ for some $w \in G.Adj(v)$ by the if at line \ref{line:change_win} and the else at line \ref{line:if_not_gray}. This means that $w.color \ne \textbf{NIL}$ and so we can apply the inductive hypothesis on $w$, hence $w.win = \mathrm{TRUE}$ implies $w.d < \infty$, therefore $v.d = w.d + 1$ implies that $v.d < \infty$.
		
		\item If $turn(v) \ne g$:
		
		$v.win$ initially gets the value $\mathrm{TRUE}$ and $v.d$ the value 0 by the if at line \ref{line:init_win}, so $v.win = \mathrm{TRUE}$ implies that $w.win = \mathrm{TRUE}$ and $v.d = w.d + 1$ for some $w \in G.Adj(v)$ by the elses at lines \ref{line:depth_else} and \ref{line:if_not_gray}. This means that $w.color \ne \textbf{NIL}$ and so we can apply the inductive hypothesis on $w$, hence $w.win = \mathrm{TRUE}$ implies $w.d < \infty$, therefore $v.d = w.d + 1$ implies that $v.d < \infty$.
	\end{itemize}
	
	\begin{flushleft}
		($\impliedby$)
	\end{flushleft}
	
	Suppose that $v.d < \infty$.
	\begin{itemize}
		\item If $turn(v) = g$:
		
		$v.d$ initially gets the value $\infty$ by the if at line \ref{line:init_win}, so $v.d < \infty$ implies that $w.d = v.d - 1 < \infty$ for some $w \in G.Adj(v)$ by the if at line \ref{line:change_win} and the else at line \ref{line:if_not_gray}. This means that $w.color \ne \textbf{NIL}$ and so we can apply the inductive hypothesis on $w$, hence $w.d < \infty$ implies $w.win = \mathrm{TRUE}$, therefore $v.win = \mathrm{TRUE}$ by the else at line \ref{line:if_not_gray} and the if at line \ref{line:change_win}.\pagebreak
		
		\item If $turn(v) \ne g$:
		
		$v.d$ initially gets the value 0 by the if at line \ref{line:init_win}, so $v.d < \infty$ implies that $w.d + 1 \le v.d < \infty$ for all $w \in G.Adj(v)$ by the elses at lines \ref{line:depth_else} and \ref{line:if_not_gray}. This means that $w.color \ne \textbf{NIL}$ and so we can apply the inductive hypothesis on $w$, hence $w.d < \infty$ implies $w.win = \mathrm{TRUE}$ for all $w \in G.Adj(v)$, therefore $v.win = \mathrm{TRUE}$ by the else at line \ref{line:if_not_gray} and the if at line \ref{line:change_win}.\qedhere
	\end{itemize}
\end{proof}

\begin{theorem}\label{theo:depth_upper_bound}
	Let $g \in \{1, -1\}$. After the execution of the procedure $\Call{WinTree}{G, g}$, for every $v \in V$ such that $v.color \ne \mathbf{NIL}$ we have $v.d \ge d_g(v)$.
\end{theorem}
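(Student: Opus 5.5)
I will prove the statement by structural induction on the recursion tree, in the same style as Theorem \ref{theo:finite_depth_win_equiv}. The claim is that at the end of the invocation of $\Call{WinTreeRecursion}{G, g, v}$ that last set $v.d$, we have $v.d \ge d_g(v)$. When $v.d = \infty$ there is nothing to prove, so I will assume throughout that $v.d < \infty$. By Theorem \ref{theo:finite_depth_win_equiv} this gives $v.win = \mathrm{TRUE}$, hence $v.color = \mathrm{BLACK}$. Theorem \ref{theo:correctness} then gives $eval(v) = g$, and Lemma \ref{lemma:eval_well_founded} gives $d_g(v) < \infty$. This lets me compare finite numbers.

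\textbf{Base case.} Let $v \in sinks(G)$. By the if at line \ref{line:base_case}, $v.d < \infty$ forces $win(v) = g$ and $v.d = 0$. By Definition \ref{def:win_depth}, $d_g(v) = 0$, so the inequality holds.

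\textbf{Case $turn(v) = g$.} The value $v.d$ starts at $\infty$ and is only modified at line \ref{line:change_win}, where it is set to $w.d + 1$ for some child $w$ with $w.win = \mathrm{TRUE}$. Here $w$ is not gray, so its value comes from the else at line \ref{line:if_not_gray}. This $w$ was assigned by a completed (or earlier) invocation, so the inductive hypothesis gives $w.d \ge d_g(w)$. Definition \ref{def:win_depth} then yields
\[
v.d = w.d + 1 \ge d_g(w) + 1 \ge \min_{u \in G.Adj(v)} d_g(u) + 1 = d_g(v).
\]

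\textbf{Case $turn(v) \ne g$.} This is the main obstacle. The algorithm only maximizes over the children actually examined in the while loop, whereas $d_g(v)$ maximizes over all of $G.Adj(v)$. I must therefore show that $v.d < \infty$ forces the loop to scan every child. The loop stops early only when $end$ becomes TRUE, which happens only when $v.win$ is set to a non-\textbf{NIL} value different from the initial TRUE, namely FALSE. But $v.win = \mathrm{TRUE}$ at the end, and once $end$ is set the loop exits, so $v.win$ never changes back. Hence $end$ never triggered and every $w \in G.Adj(v)$ was examined.

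For each such $w$ I also need $win = \mathrm{TRUE}$ at line \ref{line:change_win}. Otherwise $v.win$ would become \textbf{NIL} or FALSE, and inspecting the branches shows it could never return to TRUE, because $win \ne (turn(v) \ne g)$ only assigns non-TRUE values when $turn(v) \ne g$. So no child was gray, each came through the else at line \ref{line:if_not_gray} with $w.win = \mathrm{TRUE}$ and $w.d < \infty$, and line \ref{line:depth_else} guarantees $v.d \ge w.d + 1$ for every child. Applying the inductive hypothesis to each child gives
\[
v.d \ge \max_{w \in G.Adj(v)} (w.d + 1) \ge \max_{w \in G.Adj(v)} d_g(w) + 1 = d_g(v).
\]

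The delicate point I will state carefully is the use of the inductive hypothesis for children that were already BLACK from an earlier call. The values $w.d$ of a BLACK vertex are never overwritten afterwards, because BLACK vertices are not re-invoked (line \ref{line:recursion}). So the bound established when $w$ was finished still holds when $v$ reads it.
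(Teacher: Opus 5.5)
Your proof is correct and follows the paper's argument. Both use structural induction on the recursion tree and use Theorem \ref{theo:finite_depth_win_equiv} to dispose of every case with $v.win \ne \mathrm{TRUE}$; your ``$v.d = \infty$ is trivial'' step is just the contrapositive of the paper's WHITE / BLACK--FALSE split. Both then treat the two $turn(v)$ cases with the $\min$/$\max$ recursion for $d_g$.

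Your extra care about the loop scanning every child when $turn(v) \ne g$, and about the values of BLACK vertices being frozen, fills in details the paper leaves implicit. The detour through Theorem \ref{theo:correctness} and Lemma \ref{lemma:eval_well_founded} to get $d_g(v) < \infty$ is unnecessary, since your inequalities hold in $\N \cup \{\infty\}$ anyway.
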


\begin{proof}
	By structural induction on the recursion tree.
	
	\begin{flushleft}
		BASE: $v \in sinks(G)$
	\end{flushleft}
	
	By the if at line \ref{line:base_case} and by def. \ref{def:win_depth}, $v.d = d_g(v)$.
	
	\begin{flushleft}
		INDUCTION: $v \in V \setminus sinks(G)$
	\end{flushleft}
	
	By the if at line \ref{line:color_if}, $v.color \ne \mathbf{NIL}$ implies $v.color = \mathrm{WHITE}$ or $v.color = \mathrm{BLACK}$.
	
	\begin{itemize}
		\item If $v.color = \mathrm{WHITE}$:
		
		$v.win = \mathbf{NIL}$ by the if at line \ref{line:color_if}, so $v.d = \infty$ by the theorem \ref{theo:finite_depth_win_equiv}, hence $v.d \ge d_g(v)$.
		
		\item If $v.color = \mathrm{BLACK}$:
		
		$v.win \ne \mathbf{NIL}$ by the if at line \ref{line:color_if}, so it can only be that $v.win = \mathrm{TRUE}$ or $v.win = \mathrm{FALSE}$.
		\begin{itemize}
			\item If $v.win = \mathrm{FALSE}$, then $v.d = \infty$ by the theorem \ref{theo:finite_depth_win_equiv}, hence $v.d \ge d_g(v)$.
			
			\item If $v.win = \mathrm{TRUE}$:
			\begin{itemize}
				\item If $turn(v) = g$:
				
				$v.win$ initially gets the value $\mathrm{FALSE}$ by the if at line \ref{line:init_win}, so $v.win = \mathrm{TRUE}$ implies that $w.win = \mathrm{TRUE}$ and $v.d = w.d + 1$ for some $w \in G.Adj(v)$ by the if at line \ref{line:change_win} and the else at line \ref{line:if_not_gray}. This means that $w.color \ne \textbf{NIL}$ and so we can apply the inductive hypothesis on $w$, hence $d_g(w) \le w.d$. Since $d_g(v) \le d_g(w) + 1$ by def. \ref{def:win_depth}, then $d_g(v) \le d_g(w) + 1 \le w.d + 1 = v.d$.
				
				\item If $turn(v) \ne g$:
				
				$v.win$ initially gets the value $\mathrm{TRUE}$ and $v.d$ the value 0 by the if at line \ref{line:init_win}, so $v.win = \mathrm{TRUE}$ implies that $w.win = \mathrm{TRUE}$ and $w.d + 1 \le v.d$ for all $w \in G.Adj(v)$ by the elses at lines \ref{line:depth_else} and \ref{line:if_not_gray}. This means that $w.color \ne \textbf{NIL}$ and so we can apply the inductive hypothesis on $w$, hence $d_g(w) \le w.d$ for all $w \in G.Adj(v)$.
				
				Since $d_g(v) = d_g(u) + 1$ for some $u \in G.Adj(v)$ by def. \ref{def:win_depth}, and since $d_g(u) \le u.d$ and $u.d + 1 \le v.d$, then $d_g(v) = d_g(u) + 1 \le u.d + 1 \le v.d$. \qedhere
			\end{itemize}
		\end{itemize}
	\end{itemize}
\end{proof}

Note that the theorem \ref{theo:finite_depth_win_equiv} is somewhat of a \emph{computational version} of the lemma \ref{lemma:eval_well_founded}. Indeed, we have the corollaries below that follow easily from the previous results of this article, whose proof is left to the reader. Also note that using theorems \ref{theo:finite_depth_win_equiv}, \ref{theo:depth_upper_bound} in conjunction with the lemma \ref{lemma:eval_well_founded} leads to a further proof of the correctness theorem \ref{theo:correctness}.

\begin{corollary}\label{corol:finite_depth_eval_equiv}
	Let $g \in \{1, -1\}$. After the execution of the procedure $\Call{WinTree}{G, g}$, for every $v \in V$ such that $v.color = \mathrm{BLACK}$ we have:
	\[
	v.d < \infty \iff eval(v) = g
	\]
\end{corollary}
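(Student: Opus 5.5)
The plan is to chain Theorem \ref{theo:finite_depth_win_equiv} with the correctness results, using that for a black vertex $v.win \in \{\mathrm{TRUE}, \mathrm{FALSE}\}$. Fix $v$ with $v.color = \mathrm{BLACK}$. By the if at line \ref{line:color_if}, $v.win \ne \textbf{NIL}$. Since $v$ is black, $v.color \ne \mathbf{NIL}$, so Theorem \ref{theo:finite_depth_win_equiv} applies and gives $v.win = \mathrm{TRUE} \iff v.d < \infty$. It therefore suffices to prove $v.win = \mathrm{TRUE} \iff eval(v) = g$ for black $v$, and then compose the two equivalences.

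For the direction $(\implies)$, assume $v.d < \infty$. Theorem \ref{theo:finite_depth_win_equiv} yields $v.win = \mathrm{TRUE}$, and Theorem \ref{theo:correctness} then gives $eval(v) = g$. As an alternative route, Theorem \ref{theo:depth_upper_bound} gives $d_g(v) \le v.d < \infty$, and Lemma \ref{lemma:eval_well_founded} turns this into $eval(v) = g$. This second route has the advantage of not needing the blackness hypothesis at all.

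For the direction $(\impliedby)$, assume $eval(v) = g$. Since $v.color = \mathrm{BLACK}$, Theorem \ref{theo:compl_black_vert} gives $v.win = \mathrm{TRUE}$. Theorem \ref{theo:finite_depth_win_equiv} then yields $v.d < \infty$. Equivalently, one can argue by contraposition: if $v.d = \infty$, then $v.win \ne \mathrm{TRUE}$, and since $v$ is black and hence $v.win \ne \textbf{NIL}$, we get $v.win = \mathrm{FALSE}$. Corollary \ref{corol:compl_correctness} then gives $eval(v) \ne g$.

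There is no real obstacle here; the only point needing care is the role of the blackness hypothesis. It is used only in $(\impliedby)$, to exclude $v.win = \textbf{NIL}$, which is exactly what happens for white vertices cut off by a gray cycle. It also guarantees the hypothesis $v.color \ne \mathbf{NIL}$ of Theorem \ref{theo:finite_depth_win_equiv}. I would state this explicitly so the reader sees why the corollary is restricted to black vertices.
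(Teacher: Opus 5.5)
Your proof is correct and is essentially the argument the paper leaves to the reader. Theorem \ref{theo:finite_depth_win_equiv} (applicable since $v.color \ne \mathbf{NIL}$) reduces the claim to $v.win = \mathrm{TRUE} \iff eval(v) = g$, which for black $v$ follows from Theorem \ref{theo:correctness} and Theorem \ref{theo:compl_black_vert}; your alternative route for $(\implies)$ via Theorem \ref{theo:depth_upper_bound} and Lemma \ref{lemma:eval_well_founded} mirrors the remark the paper makes right after Theorem \ref{theo:depth_upper_bound}. One small quibble: that alternative has no real advantage, because neither route for $(\implies)$ uses blackness, and both still need $v.color \ne \mathbf{NIL}$ (for Theorem \ref{theo:finite_depth_win_equiv} and Theorem \ref{theo:depth_upper_bound} respectively).
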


\begin{corollary}
	Let $g \in \{1, -1\}$. After the execution of the procedure $\Call{WinTree}{G, g}$, for every $v \in V$ such that $v.color = \mathrm{BLACK}$ we have:
	\[
	v.d < \infty \iff d_g(v) < \infty 
	\]
\end{corollary}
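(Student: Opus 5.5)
The plan is to chain the equivalences already proved. For a vertex $v$ with $v.color = \mathrm{BLACK}$, the previous corollary (Corollary \ref{corol:finite_depth_eval_equiv}) gives $v.d < \infty \iff eval(v) = g$. Lemma \ref{lemma:eval_well_founded} gives $eval(v) = g \iff d_g(v) < \infty$ for every $v \in V$, with no hypothesis on the execution. Composing the two yields $v.d < \infty \iff d_g(v) < \infty$.

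Since the previous corollary was left to the reader, I would also make sure it is actually available, and prove it if needed. A black vertex has $v.color \ne \mathbf{NIL}$, so Theorem \ref{theo:finite_depth_win_equiv} gives $v.d < \infty \iff v.win = \mathrm{TRUE}$.
\begin{itemize}
	\item Theorem \ref{theo:correctness} gives $v.win = \mathrm{TRUE} \implies eval(v) = g$.
	\item Theorem \ref{theo:compl_black_vert}, which uses blackness, gives $eval(v) = g \implies v.win = \mathrm{TRUE}$.
\end{itemize}
Together these give $v.win = \mathrm{TRUE} \iff eval(v) = g$ for black vertices, and hence Corollary \ref{corol:finite_depth_eval_equiv}.

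There is also a partially independent route for one direction. Theorem \ref{theo:depth_upper_bound} gives $d_g(v) \le v.d$, so $v.d < \infty$ forces $d_g(v) < \infty$ directly. The converse direction, $d_g(v) < \infty \implies v.d < \infty$, is the only place where blackness matters: for white vertices $v.d = \infty$ can occur even though $d_g(v) < \infty$. That direction must therefore go through Lemma \ref{lemma:eval_well_founded}, Theorem \ref{theo:compl_black_vert} and Theorem \ref{theo:finite_depth_win_equiv}.

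I expect no real obstacle, since the proof is a composition of stated results. The only point to check is that each invoked result's hypothesis on $v.color$ is met: Theorem \ref{theo:finite_depth_win_equiv} and Theorem \ref{theo:depth_upper_bound} require $v.color \ne \mathbf{NIL}$, and this is implied by $v.color = \mathrm{BLACK}$.
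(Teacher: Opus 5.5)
Your proposal is correct and is the argument the paper intends when it leaves this corollary to the reader: chain Corollary~\ref{corol:finite_depth_eval_equiv} with Lemma~\ref{lemma:eval_well_founded}, where Corollary~\ref{corol:finite_depth_eval_equiv} is obtained, as you do, from Theorems~\ref{theo:finite_depth_win_equiv}, \ref{theo:correctness} and \ref{theo:compl_black_vert}. Your extra observation also holds: Theorem~\ref{theo:depth_upper_bound} gives the forward direction directly, and blackness is needed only for the converse, since a white vertex whose winning line passes through a gray ancestor can have $d_g(v) < \infty$ but $v.d = \infty$.
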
 \pagebreak

\section{Implementation strategies and optimizations}

\subsection{Order of visiting adjacent vertices}

A first simple idea for reducing useless recursive invocations is to iterate first on the black vertices (that is, those that do not require a recursive invocation, since the value of their attributes has already been determined) and use it for the algorithm seen in the section \ref{sect:solvalibility}. Therefore, the algorithm becomes as follows:

\begin{algorithmic}[1]
	\Procedure{WinTreeRecursion}{$G, g, v$}
		\State $v.color \gets \mathrm{GRAY}$
		
		\If{$v \in sinks(G)$} \Comment{base of recursion}
			\If{$win(v) = g$}
				\State $v.win \gets \mathrm{TRUE}$
				\State $v.d \gets 0$
			\Else
				\State $v.win \gets \mathrm{FALSE}$
				\State $v.d \gets \infty$
			\EndIf
		\Else \Comment{recursive step}
			\If{$turn(v) = g$}
				\State $v.win \gets \mathrm{FALSE}$ 
				\State $v.d \gets \infty$
			\Else
				\State $v.win \gets \mathrm{TRUE}$
				\State $v.d \gets 0$
			\EndIf
			
			\State $end \gets \mathrm{FALSE}$
			\State $i \gets 0$
			\While{\textbf{not} $end$ \textbf{and} $i < |G.Adj(v)|$}
				\State $w \gets G.Adj(v)[i]$
				
				\If{$w.color = \mathrm{BLACK}$}
					\If{$w.win \ne (turn(v) \ne g)$}
						\State $v.win \gets w.win$
						\State $v.d \gets w.d + 1$
						\State $end \gets \mathrm{TRUE}$
					\ElsIf{$turn(v) \ne g$ \textbf{and} $w.d + 1 > v.d$}
						\State $v.d \gets w.d + 1$
					\EndIf
				\EndIf
				
				\State $i \gets i + 1$
			\EndWhile \pagebreak
			
			\State $i \gets 0$
			\While{\textbf{not} $end$ \textbf{and} $i < |G.Adj(v)|$}
				\State $w \gets G.Adj(v)[i]$
				
				\If{$w.color \ne \mathrm{BLACK}$}
					\If{$w.color = \mathrm{GRAY}$}
						\State $win \gets \textbf{NIL}$
						\State $d \gets \infty$
					\Else \Comment{$w.color \ne \mathrm{BLACK}$}
						\State \Call{WinTreeRecursion}{$G, g, w$}
						\State $win \gets w.win$
						\State $d \gets w.d + 1$
					\EndIf
					
					\If{$win \ne (turn(v) \ne g)$}
						\State $v.win \gets win$
						\State $v.d \gets d$
						\If{$v.win \ne \textbf{NIL}$}
							\State $end \gets \mathrm{TRUE}$
						\EndIf
					\ElsIf{$turn(v) \ne g$ \textbf{and} $d > v.d$}
						\State $v.d \gets d$
					\EndIf
				\EndIf
				
				\State $i \gets i + 1$
			\EndWhile
		\EndIf
		
		\If{$v.win = \textbf{NIL}$}
			\State $v.color \gets \mathrm{WHITE}$
		\Else
			\State $v.color \gets \mathrm{BLACK}$
		\EndIf
	\EndProcedure
\end{algorithmic}

Regarding the order of visiting non-black vertices, if we're talking about chess, for example, it's preferable to first analyze what chess engines consider the "strongest" moves. In general, if programs exist that provide heuristic evaluations of positions, it's useful to use them to analyze moves in order of playing strength, minimizing the analysis of weaker moves, so as to terminate the iteration on the vertices as quickly as possible. This approach is a sort of hybrid between the type A and type B strategies described in Shannon's 1950 work \cite{shannon:computer_playing_chess}. \pagebreak

\subsection{How to determine a draw strategy}

Suppose we want to find a way for player $g \in\{1, -1\}$ to draw, starting at the vertex $v$ with $turn(v) = g$. To do this, we could determine if there is no way for $-g$ to win, by invoking $\Call{WinTreeRecursion}{G, -g, v}$. If we get $v.win \ne \mathrm{TRUE}$ after the execution, then $eval(v) \ne -g$ by the completeness theorem \ref{theo:completeness}, and this means there is a way for $g$ at least to draw. But obviously this is not enough, we would also like to know which of the vertices adjacent to $v$ leads to a draw (or possibly to a win), that is, what move we should make to draw.

It could be that $w.win = \mathrm{FALSE}$ for some $w \in G.Adj(v)$. In that case, the corollary \ref{corol:compl_correctness} assures that $eval(w) \ne -g$, so $w$ is the correct choice. However, we might not be so lucky. Indeed, if $u.win \ne \mathrm{FALSE}$ for all $u \in G.Adj(v)$, the theorem \ref{theo:completeness} only assures that $eval(w) \ne -g$ for some $w \in G.Adj(v)$, but there are no other results we can exploit to know which one it is: there's no theorem that assures $w.win = \textbf{NIL} \implies eval(w) \ne -g$ if $w$ is a white vertex. The only exception is when there is only one $w \in G.Adj(v)$ such that $w.win = \textbf{NIL}$, in that case the choice is obvious.

Therefore, it is clear that the counter-nominal version of the completeness theorem is not constructive. To overcome this problem, we need to prove a stronger completeness theorem by formally defining the concept of winning strategy.

\begin{definition}[Tree terminology]
	Given a tree $T$, we will use the following terminology.
	\begin{itemize}
		\item $T.root$ is the root of $T$;
		
		\item $T.n$ is the number of child subtrees of $T$;
		
		\item $T.c_1, \dots, T.c_{T.n}$ are the child subtrees of $T$.
	\end{itemize}
\end{definition}

\begin{definition}[Height of a tree]
	Given a tree $T$, the height $h(T)$ is defined by induction as follows.
	\begin{itemize}
		\item if $T.n = 0$, then $h(T) := 0$;
		
		\item if $T.n \ge 1$, then $h(T) := \max\{h(T.c_i) \mid i \in \{1, \dots, T.n\}\} + 1$.
	\end{itemize}
\end{definition}

\begin{definition}[Nodes of a tree]
	Given a tree $T$, the set $nodes(T)$ is defined by induction as follows:
	\[
	nodes(T) := \{T.root\} \cup \bigcup_{i = 1}^{T.n} nodes(T.c_i)
	\]
\end{definition} \pagebreak

\begin{definition}[Winning trees]\label{def:win_tree}
	Let $g \in \{1, -1\}$. The set of trees $WinTrees(G, g)$ is defined by induction as follows.
	
	Let $T$ be a tree such that $T.root \in V$.
	
	\begin{flushleft}
		BASE: $T.root \in sinks(G)$
	\end{flushleft}
	\[
	T \in WinTrees(G, g) \iff win(T.root) = g \text{ and } T.n = 0
	\]
	
	\begin{flushleft}
		INDUCTION: $T.root \in V \setminus sinks(G)$
	\end{flushleft}

	\begin{itemize}
		\item If $turn(T.root) = g$:
		
		$T \in WinTrees(G, g)$ if and only if:
		\begin{itemize}
			\item $T.n = 1$;
			
			\item $T.c_1 \in WinTrees(G, g)$;
			
			\item $T.c_1.root \in G.Adj(T.root)$.
		\end{itemize}
		
		\item If $turn(T.root) \ne g$:
		
		$T \in WinTrees(G, g)$ if and only if:
		\begin{itemize}
			\item $T.n = |G.Adj(T.root)|$;
			
			\item  for all $i \in \{1, \dots, T.n\}$: $T.c_i \in WinTrees(G, g)$;
			
			\item for all $w \in G.Adj(T.root)$, there exists $i_w \in \{1, \dots, T.n\}$ such that $w = T.c_{i_w}.root$.
		\end{itemize}
	\end{itemize}
\end{definition}

\begin{lemma}[Correctness of winning trees]\label{lemma:correctness_trees}
	If $g \in \{1, -1\}$ and $T \in WinTrees(G, g)$, then $eval(T.root) = g$.
\end{lemma}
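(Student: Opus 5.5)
We argue by induction on the height $h(T)$, or equivalently by structural induction on $T$, following Definition~\ref{def:win_tree}. The claim to prove for every tree is: if $T \in WinTrees(G, g)$, then $eval(T.root) = g$. Trees are finite objects, so this induction is well founded. It therefore avoids the infinite-descent issues that came up with cycles in the game graph.

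\emph{Base case} ($T.root \in sinks(G)$). By Definition~\ref{def:win_tree}, $T \in WinTrees(G, g)$ forces $win(T.root) = g$. Then $eval(T.root) = win(T.root) = g$ by the base clause of Definition~\ref{def:eval_func}. This case needs nothing else; in particular $h(T)=0$ here.

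\emph{Inductive step} ($T.root \in V \setminus sinks(G)$). We split on $turn(T.root)$.

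\begin{itemize}
\item Suppose $turn(T.root) = g$. Then $T.n = 1$ and $T.c_1 \in WinTrees(G, g)$, with $h(T.c_1) < h(T)$. By the inductive hypothesis, $w := T.c_1.root$ satisfies $eval(w) = g$. Since $w \in G.Adj(T.root)$, Definition~\ref{def:eval_func} gives the following. If $g = 1$, then $eval(T.root) = \max_{u} eval(u) \ge eval(w) = 1$, hence it equals $1$. If $g = -1$, the minimum is $\le -1$, hence it equals $-1$. This uses only that $-1$ and $1$ are the extreme values of $\{-1,0,1\}$.

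\item Suppose $turn(T.root) \ne g$. Every $w \in G.Adj(T.root)$ equals $T.c_{i_w}.root$ for some child subtree in $WinTrees(G, g)$. By the inductive hypothesis, $eval(w) = g$ for all such $w$. The set $G.Adj(T.root)$ is nonempty by Definition~\ref{def:game_graph}. Hence the min (if $g=1$, so $turn = -1$) or the max (if $g=-1$) taken over $G.Adj(T.root)$ is exactly $g$.
\end{itemize}

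The only point needing care is the inductive hypothesis in the second case. We apply it to each child subtree $T.c_i$, whose height is strictly smaller than $h(T)$ by the definition of height. We also use the covering condition (every $w$ is the root of some child) to conclude that all successors are evaluated to $g$. No step here appears to be a genuine obstacle.
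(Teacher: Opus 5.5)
Your proposal is correct and follows essentially the same structural induction on $T$ as the paper, with the same base case and the same split on $turn(T.root)$. Your remarks that $\pm 1$ are the extreme values of $\{-1,0,1\}$ and that $G.Adj(T.root)$ is nonempty make explicit what the paper leaves implicit when it appeals to Definition~\ref{def:eval_func}.
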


\begin{proof}
	By structural induction on $T$.
	
	\begin{flushleft}
		BASE: $T.root \in sinks(G)$
	\end{flushleft}
	
	$T \in WinTrees(G, g)$ implies $win(T.root) = g$, therefore $eval(T.root) = g$ by def. \ref{def:eval_func}.
	
	\begin{flushleft}
		INDUCTION: $T.root \in V \setminus sinks(G)$
	\end{flushleft}
	
	\begin{itemize}
		\item If $turn(T.root) = g$:
		
		by def. \ref{def:win_tree}, $T.c_1 \in WinTrees(G, g)$ and $T.c_1.root \in G.Adj(T.root)$, so $eval(T.c_1.root) = g$ by induction, therefore $eval(T.root) = g$ by def. \ref{def:eval_func}.
		
		\item If $turn(T.root) \ne g$:
		
		by def. \ref{def:win_tree}, for all $w \in G.Adj(T.root)$ there exists $i_w \in \{1, \dots, T.n\}$ such that $w = T.c_{i_w}.root$ and $T.c_{i_w} \in WinTrees(G, g)$, so $eval(w) = g$ by induction, therefore $eval(T.root) = g$ by def. \ref{def:eval_func}.\qedhere
	\end{itemize}
\end{proof}\pagebreak

\begin{lemma}[Completeness of winning trees]\label{lemma:completeness_trees}
	Let $g \in \{1, -1\}$ and $v \in G.V$ such that $eval(v) = g$. Then, there exists $T \in WinTrees(G, g)$ such that $T.root = v$.
\end{lemma}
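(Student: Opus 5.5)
I would prove the statement by strong induction on $d_g(v)$, which is a legitimate induction parameter here: since $eval(v) = g$, Lemma \ref{lemma:eval_well_founded} gives $d_g(v) < \infty$, so $d_g(v) \in \N$. The statement to prove by induction on $n \in \N$ is: for every $v \in V$ with $eval(v) = g$ and $d_g(v) = n$, there exists $T \in WinTrees(G, g)$ with $T.root = v$. The point of using $d_g$ rather than some structural measure on $G$ is that $G$ may contain cycles, whereas $d_g$ strictly decreases along the children we select.

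In the base case $v \in sinks(G)$, we have $eval(v) = win(v) = g$ by def. \ref{def:eval_func}, so the one-node tree $T$ with $T.root = v$ and $T.n = 0$ is in $WinTrees(G, g)$ by the base clause of def. \ref{def:win_tree}. (Also, $d_g(v) = 0$ forces $v$ to be a sink, because the inductive clause of def. \ref{def:win_depth} always adds $1$.)

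For the inductive step, take $v \notin sinks(G)$ and split on $turn(v)$.
\begin{itemize}
	\item If $turn(v) = g$, then $d_g(v) = \min_{u \in G.Adj(v)} d_g(u) + 1$ is finite, so some $w \in G.Adj(v)$ satisfies $d_g(w) = d_g(v) - 1 < \infty$. By Lemma \ref{lemma:eval_well_founded}, $eval(w) = g$, and by the inductive hypothesis there is a tree $T_w \in WinTrees(G, g)$ with root $w$. Let $T$ be the tree with root $v$ and the single child $T_w$.
	\item If $turn(v) \ne g$, then $d_g(u) \le d_g(v) - 1 < \infty$ for every $u \in G.Adj(v)$. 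Again by Lemma \ref{lemma:eval_well_founded}, $eval(u) = g$ for each such $u$, and the inductive hypothesis gives trees $T_u \in WinTrees(G, g)$ with root $u$. Let $T$ be the tree with root $v$ whose children are exactly the $T_u$, one for each $u \in G.Adj(v)$.
\end{itemize}
In both cases $T$ satisfies the inductive clause of def. \ref{def:win_tree}.

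The main obstacle is finiteness in the second case. Setting $T.n = |G.Adj(v)|$ requires $G.Adj(v)$ to be finite, and the paper only assumes this implicitly, through the assumption of finitely many positions made in section \ref{sect:solvalibility}. I would state this finiteness assumption explicitly. I would also check that the induction is well founded even though $G$ may have cycles: it is, because every child we choose has strictly smaller $d_g$, so no vertex repeats along a branch of $T$ and $h(T) \le d_g(v)$. Distinct neighbours $u$ give distinct child roots, so the indices $i_w$ required by the definition exist.
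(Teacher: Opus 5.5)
Your proof is correct and is essentially the paper's argument. Both use strong induction on $d_g(v)$ (finite by Lemma \ref{lemma:eval_well_founded}), the same case split on $turn(v)$, and the same construction of $T$: one child when $turn(v)=g$, one child per neighbour otherwise. The only difference is the direction in which you use Lemma \ref{lemma:eval_well_founded}: you take a neighbour minimizing $d_g$ and recover $eval(w)=g$ from the lemma, whereas the paper takes a neighbour with $eval(w)=g$ of minimal depth and deduces $d_g(w)=d_g(v)-1$. Your remark that $G.Adj(v)$ must be finite is a fair point the paper leaves implicit.
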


\begin{proof}
	By induction on the depth to winning.
	
	Let $T.root := v$.
	
	\begin{flushleft}
		BASE: $v \in sinks(G)$
	\end{flushleft}
	
	$eval(v) = g$ implies $win(v) = g$, so letting $T.n := 0$ we get $T \in WinTrees(G, g)$.
	
	\begin{flushleft}
		INDUCTION: $v \in V \setminus sinks(G)$
	\end{flushleft}
	
	\begin{itemize}
		\item If $turn(v) = g$:
		
		since $eval(v) = g$, then $eval(w) = g$ for some $w \in G.Adj(v)$ by def. \ref{def:eval_func}. Without loss of generality, we can assume that
		\[
		d_g(w) = \min_{t \in G.Adj(v) \colon eval(t) = g} d_g(t)
		\]
		
		Since, by the lemma \ref{lemma:eval_well_founded}, $eval(w) = g$ implies $d_g(w) < \infty$, and since $d_g(u) = \infty$ for every $u \in G.Adj(v)$ such that $eval(u) \ne g$, then $d_g(w) = d_g(v) - 1 < \infty$ by def. \ref{def:win_depth}. This means we can apply the inductive hypothesis on $w$, so $eval(w) = g$ implies that there is a $T' \in WinTrees(G, g)$ such that $T'.root = w$, therefore, letting $T.n := 1$ and $T.c_1 := T'$, we get $T \in WinTrees(G, g)$.
		
		\item If $turn(v) \ne g$:
		
		Let $\{w_1, \dots, w_{|G.Adj(v)|}\} := G.Adj(v)$. Since $eval(v) = g$, then $eval(w_i) = g$ for all $i \in \{1, \dots, |G.Adj(v)|\}$ by def. \ref{def:eval_func}. By the lemma \ref{lemma:eval_well_founded}, $eval(w_i) = g$ implies $d_g(w_i) < \infty$, hence we have by def. \ref{def:win_depth}:
		\[
		d_g(w_i) \le \max_{t \in G.Adj(v)} d_g(t) = d_g(v) - 1 < \infty
		\]
		
		This means we can apply the inductive hypothesis on $w_i$, so $eval(w_i) = g$ implies that there is a $T'_i \in WinTrees(G, g)$ such that $T'_i.root = w_i$, therefore, letting $T.n := |G.Adj(v)|$ and $T.c_i := T'_i$ for all $i \in \{1, \dots, |G.Adj(v)|\}$, we get $T \in WinTrees(G, g)$.\qedhere
	\end{itemize}
\end{proof}\pagebreak

\begin{lemma}\label{lemma:correct_tree_construction}
	Let $g \in \{1, -1\}$, $v \in G.V$ and $S := \{w \in G.V \mid w.color = \mathrm{GRAY} \text{ and } v \leadsto w\}$ before calling $\Call{WinTreeRecursion}{G, g, v}$.
	
	If there exists $T \in WinTrees(G, g)$ such that $T.root = v$ and $nodes(T) \cap S = \emptyset$, then $v.win = \mathrm{TRUE}$ after the execution of $\Call{WinTreeRecursion}{G, g, v}$.
\end{lemma}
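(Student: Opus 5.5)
By structural induction on $T$ (equivalently, on the height $h(T)$). The argument will track how the set of gray vertices evolves during the call. Throughout I will use one invariant of the depth-first recursion. When $\Call{WinTreeRecursion}{G, g, w}$ is invoked from inside the call on $v$, the gray vertices are exactly those gray before the call on $v$, together with $v$ and the vertices on the current recursion stack between $v$ and $w$. Every one of these added vertices is reachable from $v$.

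\textbf{Base case} ($T.root = v \in sinks(G)$). Then $win(v) = g$, and the if at line \ref{line:base_case} sets $v.win \gets \mathrm{TRUE}$ directly.

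\textbf{Inductive step, preliminary observation.} Take $v \notin sinks(G)$. Let $w$ be a child root $T.c_i.root$ examined in the while loop. I claim $w$ is not gray at that moment.
\begin{itemize}
	\item The gray vertices at that moment are $S' := S \cup \{v\}$ (line \ref{line:v_gray}), since the children previously examined have already finished their calls and recolored themselves WHITE or BLACK.
	\item $w \notin S$, because $w \in nodes(T)$ and $nodes(T) \cap S = \emptyset$.
	\item $w \neq v$, because $v \notin G.Adj(v)$ for non-sinks.
\end{itemize}
So the else at line \ref{line:if_not_gray} is taken.

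There are two sub-cases. If $w$ is BLACK, then by Corollary \ref{corol:compl_correctness} it cannot have $w.win = \mathrm{FALSE}$, since $eval(w) = g$ by Lemma \ref{lemma:correctness_trees}. Hence $w.win = \mathrm{TRUE}$. Otherwise $w$ is WHITE or uncolored, and $\Call{WinTreeRecursion}{G, g, w}$ is invoked. To apply the induction hypothesis to $T.c_i$ and $w$, I need $nodes(T.c_i) \cap S_w = \emptyset$, where $S_w$ is the set of gray vertices reachable from $w$ at that point. By the preliminary observation, $S_w \subseteq S'$. We have $nodes(T.c_i) \cap S \subseteq nodes(T) \cap S = \emptyset$. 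It remains to rule out $v \in nodes(T.c_i)$.

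\textbf{Main obstacle.} The point to settle is whether $v$ can occur inside a subtree of $T$. If $v \in nodes(T.c_i)$, I would replace $T$ by the subtree of $T.c_i$ rooted at that occurrence of $v$. That subtree is again in $WinTrees(G, g)$ with root $v$, still avoids $S$, and has strictly smaller height. So I strengthen the statement, and choose $T$ of minimal height among winning trees rooted at $v$ that avoid $S$. With this choice, $v \notin nodes(T.c_i)$.

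I must also check that each $T.c_i$ is minimal in the same sense for the induction. The clean way is to do induction on the height with the hypothesis quantified over all trees: for $T.c_i$, any winning tree rooted at $w$ avoiding $S_w$ of height at most $h(T.c_i)$ suffices. Pruning repeated occurrences of $w$ is handled inside that inductive call. Then the induction hypothesis gives $w.win = \mathrm{TRUE}$ after the call.

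\textbf{Conclusion.} In both sub-cases, every child root $w$ that is examined yields $win = \mathrm{TRUE}$ at line \ref{line:rec_win_assign}.
\begin{itemize}
	\item If $turn(v) = g$: the loop either stops earlier with $v.win = \mathrm{TRUE}$, or reaches $T.c_1.root$. Since $win = \mathrm{TRUE} \ne (turn(v) \ne g) = \mathrm{FALSE}$, the if at line \ref{line:change_win} sets $v.win \gets \mathrm{TRUE}$ and ends the loop. Any earlier change of $v.win$ would also have set it to TRUE, because setting it to NIL does not end the loop. This needs a short check: a NIL assignment can be overwritten by a later TRUE.
	\item If $turn(v) \ne g$: $v.win$ starts as TRUE, and every $w \in G.Adj(v)$ is a child root by definition \ref{def:win_tree}. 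Each iteration therefore gets $win = \mathrm{TRUE}$, the condition at line \ref{line:change_win} is never met, and $v.win$ stays TRUE.
\end{itemize}
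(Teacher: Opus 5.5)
Your proof is correct and follows essentially the same route as the paper. You pick $T$ of minimal height so that $v$ cannot occur in any child subtree, bound the reachable gray set at each child by $S \cup \{v\}$, settle BLACK children with Corollary \ref{corol:compl_correctness} and Lemma \ref{lemma:correctness_trees}, and settle non-BLACK children by the induction hypothesis; the only differences are that you induct on $h(T)$ rather than on the recursion tree, and you spell out the early-termination and \textbf{NIL}-overwrite check at line \ref{line:change_win}, which the paper passes over.
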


\begin{proof}
	By structural induction on the recursion tree.
	
	Suppose there exists a $T \in WinTrees(G, g)$ such that $T.root = v$ and $nodes(T) \cap S = \emptyset$.
	
	\begin{flushleft}
		BASE: $v \in sinks(G)$
	\end{flushleft}
	
	By def. \ref{def:win_tree}, the hypothesis implies $win(v) = g$, so $v.win = \mathrm{TRUE}$ by the if at line \ref{line:base_case}.
	
	\begin{flushleft}
		INDUCTION: $v \in V \setminus sinks(G)$
	\end{flushleft}

	Without loss of generality, we can assume that the height of $T$ is minimal, that is:
	\begin{equation}\label{eq:min_height_T}
		h(T) = \min\{h(X) \mid X \in WinTrees(G, g),\ X.root = v,\ nodes(X) \cap S = \emptyset\}
	\end{equation}
	
	\begin{itemize}
		\item If $turn(v) = g$:
		
		by def. \ref{def:win_tree}, the hypothesis implies $T.c_1 \in WinTrees(G, g)$ and $T.c_1.root \in G.Adj(v)$. Since the nodes of $T$ are not in $S$, then $nodes(T.c_1) \cap S = \emptyset$ and $w.color \ne \mathrm{GRAY}$ at the execution of line \ref{line:if_gray} with $w := T.c_1.root$.
		
		Let $S' := \{u \in G.V \mid u.color = \mathrm{GRAY} \text{ and } w \leadsto u\}$ just before the execution of the if at line \ref{line:recursion}. Then $S' \subseteq S \cup \{v\}$ by def. \ref{def:reachability}, because $(v, w) \in E$ and $v.color = \mathrm{GRAY}$ after line \ref{line:v_gray}.
		
		If $v \in nodes(T.c_1)$, then, by def. \ref{def:win_tree}, there would be a $T' \in WinTrees(G, g)$ subtree of $T.c_1$ such that $T'.root = v$ and $nodes(T') \cap S = \emptyset$. But since $w \in G.Adj(v)$, by def. \ref{def:game_graph} $v \ne w$, so $h(T') < h(T.c_1) < h(T)$, contradicting \eqref{eq:min_height_T}, hence $v \notin nodes(T.c_1)$. Therefore, $S' \subseteq S \cup \{v\}$ and $nodes(T.c_1) \cap S = \emptyset$ imply $nodes(T.c_1) \cap S' = \emptyset$.
		
		At line \ref{line:recursion}, we have two cases.
		\begin{itemize}
			\item If $w.color \ne \mathrm{BLACK}$:
			
			by induction, $T.c_1 \in WinTrees(G, g)$ and $nodes(T.c_1) \cap S' = \emptyset$ entail $w.win = \mathrm{TRUE}$ after the execution of $\Call{WinTreeRecursion}{G, g, w}$.
			
			\item If $w.color = \mathrm{BLACK}$:
			
			by the if at line \ref{line:color_if}, $w.win \ne \textbf{NIL}$, so it can only be that $w.win = \mathrm{TRUE}$ or $w.win = \mathrm{FALSE}$. But if $w.win = \mathrm{FALSE}$, then $eval(w) \ne g$ by the corollary \ref{corol:compl_correctness}, so $T.c_1 \notin WinTrees(G, g)$ by the lemma \ref{lemma:correctness_trees}, contradicting $T.c_1 \in WinTrees(G, g)$. Hence, $w.win = \mathrm{TRUE}$.
		\end{itemize}
		
		Anyway, $w.win = \mathrm{TRUE}$ after the execution of the if at line \ref{line:recursion}, therefore $v.win = \mathrm{TRUE}$ by the else at line \ref{line:if_not_gray} and the if at line \ref{line:change_win}.\pagebreak
		
		\item If $turn(v) \ne g$:
		
		by def. \ref{def:win_tree}, the hypothesis implies that for all $w \in G.Adj(v)$ there exists $i_w \in \{1, \dots, T.n\}$ such that $w = T.c_{i_w}.root$ and $T.c_{i_w} \in WinTrees(G, g)$. Since the nodes of $T$ are not in $S$, then $nodes(T.c_{i_w}) \cap S = \emptyset$ and $w.color \ne \mathrm{GRAY}$ at the execution of line \ref{line:if_gray}.
		
		Let $S'_w := \{u \in G.V \mid u.color = \mathrm{GRAY} \text{ and } w \leadsto u\}$ just before the execution of the if at line \ref{line:recursion}. Then $S'_w \subseteq S \cup \{v\}$ by def. \ref{def:reachability}, because $(v, w) \in E$ and $v.color = \mathrm{GRAY}$ after line \ref{line:v_gray}.
		
		Since the height of $T$ is minimal, by the same reasoning of previous case $v \notin nodes(T.c_{i_w})$. Therefore, $S'_w \subseteq S \cup \{v\}$ and $nodes(T.c_{i_w}) \cap S = \emptyset$ imply $nodes(T.c_{i_w}) \cap S'_w = \emptyset$.
		
		At line \ref{line:recursion}, we have two cases.
		\begin{itemize}
			\item If $w.color \ne \mathrm{BLACK}$:
			
			by induction, $T.c_{i_w} \in WinTrees(G, g)$ and $nodes(T.c_{i_w}) \cap S'_w = \emptyset$ entail $w.win = \mathrm{TRUE}$ after the execution of $\Call{WinTreeRecursion}{G, g, w}$.
			
			\item If $w.color = \mathrm{BLACK}$:
			
			by the if at line \ref{line:color_if}, $w.win \ne \textbf{NIL}$, so it can only be that $w.win = \mathrm{TRUE}$ or $w.win = \mathrm{FALSE}$. But if $w.win = \mathrm{FALSE}$, then $eval(w) \ne g$ by the corollary \ref{corol:compl_correctness}, so $T.c_{i_w} \notin WinTrees(G, g)$ by the lemma \ref{lemma:correctness_trees}, contradicting $T.c_{i_w} \in WinTrees(G, g)$. Hence, $w.win = \mathrm{TRUE}$.
		\end{itemize}
		
		Anyway, $w.win = \mathrm{TRUE}$ after the execution of the if at line \ref{line:recursion} for all $w \in G.Adj(v)$, therefore $v.win = \mathrm{TRUE}$ by the else at line \ref{line:if_not_gray} and the if at line \ref{line:change_win}.\qedhere
	\end{itemize}
\end{proof}

\begin{theorem}[Completeness for draw strategies]
	Let $g \in \{1, -1\}$, $v \in G.V$ and $\left<v_0, \dots, v_n \right>$ a path of $G$ such that $v_n = v$, with $n \in \N$. If $v_i.win = \mathbf{NIL}$ for all $i \in \{0, \dots, n\}$ after the execution of $\Call{WinTreeRecursion}{G, g, v_0}$, then $eval(v) \ne g$.
\end{theorem}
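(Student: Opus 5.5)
We argue by contradiction and reduce to Lemma \ref{lemma:correct_tree_construction}. Suppose $eval(v) = g$. Lemma \ref{lemma:completeness_trees} then gives a tree $T \in WinTrees(G, g)$ with $T.root = v$. The goal is to find an index $i$ and a winning tree rooted at $v_i$ whose nodes avoid the gray set $S_i$ at the moment $\Call{WinTreeRecursion}{G, g, v_i}$ is invoked. That lemma would then give $v_i.win = \mathrm{TRUE}$, contradicting $v_i.win = \mathbf{NIL}$.

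\textbf{Step 1: arrange the invocations along the path.} Each $v_i$ has $v_i.win = \mathbf{NIL}$, so $v_i$ was visited (it is WHITE), and its value is NIL after its last invocation. I would first show that every $v_i$ actually receives an invocation. For the root $v_0$ this is immediate. Inductively, if $v_i$ is processed and ends WHITE, its while loop never sets $end$, because $end$ is set only when $v_i.win \ne \mathbf{NIL}$. Hence the loop scans all of $G.Adj(v_i)$, including $v_{i+1}$. At that moment $v_{i+1}$ is either GRAY (an ancestor on the DFS stack) or gets recursed on, since it is not BLACK: BLACK vertices never change color afterwards.

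\textbf{Step 2: pick the right index.} Take the winning tree $T$ for $v$ and consider the last invocation of $v = v_n$. If $nodes(T)$ avoids the gray set at that moment, we are done. Otherwise some gray ancestor $u$ is reachable from $v$ and lies in $nodes(T)$. The subtree $T_u$ of $T$ rooted at $u$ is again in $WinTrees(G, g)$, and $u$ is gray, so it has a pending invocation with $u.win$ still to be determined. I would iterate this, always choosing a winning tree of minimal height as in the proof of Lemma \ref{lemma:correct_tree_construction}.

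\textbf{Step 3: show the iteration terminates.} The iteration moves to strictly earlier (outer) frames of the recursion stack, so it stops after finitely many steps. It ends at some vertex $u^*$ with a winning tree avoiding the gray set at its invocation. Then $u^*.win = \mathrm{TRUE}$ after that invocation, and TRUE values persist.

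\textbf{Step 4: propagate the contradiction back to the path.} I must relate $u^*$ back to the path so that some $v_i$ is forced to be non-NIL. The cleanest route is to take $u^* = v_j$ for some $j$, using that the gray ancestors of $v_n$ reachable from it are among the vertices on the DFS stack. I would strengthen the hypothesis by choosing the path to be exactly the DFS stack path, adjusting via $v_0$'s recursion tree.

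This bookkeeping is the main obstacle. The path in the statement is arbitrary, while gray ancestors live on the recursion stack, so identifying $u^*$ with some $v_i$ requires care. If it fails, my fallback is to argue directly that $eval(v_0) = g$ whenever $eval(v) = g$ along an all-NIL path: by Theorem \ref{theo:completeness} applied to $v_0$ as root, $v_0.win$ would then be TRUE, contradicting $v_0.win = \mathbf{NIL}$. I would aim for this by propagating $eval = g$ backward using the fact that each NIL vertex scanned all of its successors.
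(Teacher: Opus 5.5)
There is a genuine gap, and it is the one you flag in Step 4. In Step 2 you apply the dichotomy at the \emph{last} invocation on $v_n$. Its gray set is whatever the recursion stack is at that moment, not the path. For instance, if $v_0$ has another successor $x$, scanned after $v_1$, from which $v_n$ is reachable, then $v_n$ (WHITE by then) can be re-invoked deep inside the frame of $x$. Your iteration then stops at some stack vertex $u^*$ whose invocation returns $\mathrm{TRUE}$. That contradicts nothing unless $u^*$ is one of the $v_i$: an intermediate frame may well end $\mathrm{TRUE}$ (e.g.\ $turn(u^*) = g$ and a later successor of $u^*$ is winning) after the nested call on $v_n$ returned $\mathbf{NIL}$. ``Choosing the path to be the DFS stack path'' has it backwards, since the path is given; what you must choose is the invocation.

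Your fallback does not work either. Propagating $eval = g$ backwards breaks at any $v_i$ with $turn(v_i) \ne g$. There $eval(v_i) = g$ needs \emph{every} successor to evaluate to $g$, while $v_i.win = \mathbf{NIL}$ only tells you its successors came back $\mathrm{TRUE}$ or $\mathbf{NIL}$/GRAY. The $\mathbf{NIL}$/GRAY ones carry no information about $eval$.

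The missing idea, which is how the paper argues, is to pin down the right invocation.
\begin{itemize}
\item First shorten the path to a simple one by cutting out cycles; the hypothesis passes to the subpath, which has the same endpoint.
\item Then sharpen your Step 1, inductively starting from the top-level call on $v_0$. The call on $v_i$ issued from the loop of the call on $v_{i-1}$ must return $\mathbf{NIL}$, since otherwise $v_i$ is BLACK for good. So its loop scans all of $G.Adj(v_i)$. When it reaches $v_{i+1}$, that vertex is not BLACK, and not GRAY because the gray vertices are exactly $v_0, \dots, v_i$ and the path is simple. So a call on $v_{i+1}$ is issued right there.
\item Hence some invocation on $v_n$ starts with gray set exactly $\{v_0, \dots, v_{n-1}\}$.
\end{itemize}
At that invocation there are two cases. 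Either $nodes(T)$ misses the gray vertices reachable from $v$, and Lemma~\ref{lemma:correct_tree_construction} gives $v.win = \mathrm{TRUE}$ for good, a contradiction. Or $T$ contains some $v_i$ with $i < n$, whose subtree is a winning tree, so $eval(v_i) = g$ by Lemma~\ref{lemma:correctness_trees}. This contradicts the statement for the shorter path $\left< v_0, \dots, v_i \right>$.

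The paper phrases this as induction on $n$. Your frame-by-frame iteration does the same job once it runs on this invocation, since it then necessarily lands on a path vertex. The minimal-height choice is not needed here; it only matters inside the lemma's proof.
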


\begin{proof}
	By induction on $n$.
	
	Suppose $v_i.win = \mathbf{NIL}$ for all $i \in \{0, \dots, n\}$ after the execution of $\Call{WinTreeRecursion}{G, g, v_0}$. First, note that we can assume without loss of generality that $\left<v_0, \dots, v_n \right>$ is a simple path, because if it were not, we could reduce it to a simple path $\left<v_0, v_{k_1}, \dots, v_{k_m} = v_n \right>$, where $m < n$, by "cutting out" the cycles inside it (i.e. by eliminating vertices inside a cycle and leaving only its endpoint), and apply the theorem's hypothesis to the subpath.
	
	\begin{flushleft}
		BASE: $n = 0$
	\end{flushleft}
	
	Since $v.win = v_0.win = \mathbf{NIL}$, then $eval(v) \ne g$ by the completeness theorem \ref{theo:completeness}.\pagebreak

	\begin{flushleft}
		INDUCTION: $n \ge 1$
	\end{flushleft}
	
	Since $v_0.win = \mathbf{NIL}$, by the if at line \ref{line:change_win} the procedure must have visited each of the vertices adjacent to $v_0$ (not necessarily after visiting it), and since $v_{i+1} \in G.Adj(v_i)$ and $v_{i+1}.win = \mathbf{NIL}$ for all $i \in \{0, \dots, n-1\}$, the same must be true for $v_1$, $v_2$ and so on up to $v_{n-1}$. Hence, since $\left<v_0, \dots, v_n \right>$ is a simple path, by the ifs at line \ref{line:color_if} and \ref{line:recursion} there must be a time when $v_i.color = \mathrm{GRAY}$ for all $i \in \{0, \dots, n-1\}$ during the execution of the procedure.
	
	Suppose by contradiction that $eval(v) = g$. Then there is a $T \in WinTrees(G, g)$ such that $T.root = v$ by the lemma \ref{lemma:completeness_trees}.
	
	Let $S := \{w \in \{v_0, \dots, v_{n-1}\} \mid v \leadsto w\}$. Then $nodes(T) \cap S \ne \emptyset$, otherwise, when $v_i.color = \mathrm{GRAY}$ for all $i \in \{0, \dots, n-1\}$, we have $v_n.win = v.win = \mathrm{TRUE}$ after the execution of $\Call{WinTreeRecursion}{G, g, v}$ by the lemma \ref{lemma:correct_tree_construction}, contradicting the hypothesis.
	
	But, by def. \ref{def:win_tree}, $nodes(T) \cap S \ne \emptyset$ implies that there is a $T' \in WinTrees(G, g)$ subtree of $T$ such that $T'.root \in \{v_0, \dots, v_{n-1}\}$, therefore $eval(v_i) = g$ for some $i \in \{0, \dots, n-1\}$ by the lemma \ref{lemma:correctness_trees}, and this cannot be true by inductive hypothesis.
\end{proof}

This last theorem directly gives us a way to always choose a move that leads at least to a draw (if one exists) against the player $g$, provided that the strategy is adopted starting from the initial position.

Note also that the lemma \ref{lemma:correct_tree_construction} could be used to avoid unnecessary multiple recursive calls, by storing for each vertex the gray vertices reached during the visit of its branch, and then check if some of them are no more gray at the next possible visit of that vertex. However, this may significantly increase memory usage or even the execution time, if repetition of positions occurs very often in the game. Therefore, exploiting this idea may not necessarily lead to optimization.\pagebreak

\nocite{*}
\printbibliography[heading=bibintoc]

@article{shannon:computer_playing_chess,
	author = {Shannon, Claude E.},
	title = {XXII. Programming a Computer for Playing Chess},
	journaltitle = {Philosophical Magazine},
	date = {1950},
	series = {7},
	volume = {41},
	number = {314},
	month = {March},
}

@book{von_neumann:game_theory,
	author = {	
	von Neumann, John and Morgenstern, Oskar},
	title = {Theory of Games and Economic Behavior},
	date = {1944},
	publisher = {Princeton University Press},
}

@book{cormen:algorithms,
	author = {Cormen, T. H. and Leiserson, C. E. and Rivest, R. L. and Stein, C.},
	title = {Introduction to algorithms},
	date = {2009},
	publisher = {Pearson},
	edition = {3},
}

@book{hopcroft:computation,
	author = {Hopcroft, J. E. and Motwani, R. and Ullman, J. D.},
	title = {Introduction to Automata Theory, Languages, and Computation},
	date = {2006},
	publisher = {Pearson},
	edition = {3},
}

@book{odifreddi:recursion_theory,
	author = {Odifreddi, Piergiorgio},
	title = {Classical Recursion Theory},
	subtitle = {The Theory of Functions and Sets of Natural Numbers},
	date = {1989},
	publisher = {Elsevier},
	edition = {1},
	volume = {1},
}

@book{odifreddi:recursion_theory_vol2,
	author = {Odifreddi, Piergiorgio},
	title = {Classical Recursion Theory},
	subtitle = {The Theory of Functions and Sets of Natural Numbers},
	date = {1999},
	publisher = {Elsevier},
	edition = {1},
	volume = {2},
}

@book{arora:computational_complexity,
	author = {Arora, S. and Barak, B.},
	title = {Computational Complexity: A Modern Approach},
	date = {2009},
	publisher = {Cambridge University Press},
	edition = {1},
}

@article{tarjan:dfs_graph,
	author = {Tarjan, Robert},
	title = {Depth-First Search and Linear Graph Algorithms},
	journal = {SIAM Journal on Computing},
	volume = {1},
	number = {2},
	pages = {146-160},
	year = {1972},
}

@article{schaeffer:checkers_solved,
	author = {Schaeffer, Jonathan and Burch, Neil and Björnsson, Yngvi and Kishimoto, Akihiro and Müller, Martin and Lake, Robert and Lu, Paul and Sutphen, Steve},
	title = {Checkers Is Solved},
	journaltitle = {Science},
	date = {2007},
	number = {317},
	month = {september},
	pages = {1518-1522},
	doi = {10.1126/science.1144079},
}

@thesis{allis:connect_four,
	author = {Allis, Victor},
	title = {A Knowledge-based Approach of Connect-Four},
	type = {Masters Thesis},
	institution = {Department of Mathematics and Computer Science Vrije Universiteit},
	date = {1988},
	subtitle = {The Game is Solved: White Wins},
	location = {Amsterdam, The Netherlands},
	month = {october},
}

@online{allen:connect_four,
	author = {Allen, James Dow},
	title = {Expert Play in Connect-Four},
	date = {1990},
	url = {https://tromp.github.io/c4.html},
}

\end{document}